\documentclass{llncs}

\usepackage{amsmath}
\usepackage{amssymb}
\usepackage{bm}
\usepackage{lmodern,microtype}
\usepackage[utf8]{inputenc}
\usepackage[T1]{fontenc}
\usepackage{braket}       % Dirac notation: \ket, \bra, \braket
\usepackage{mathrsfs}     % \mathscr
\usepackage{algorithm}
\usepackage{algpseudocode}
\algrenewcommand\algorithmicrequire{\textbf{Input:}}
\algrenewcommand\algorithmicensure{\textbf{Output:}}

\usepackage{thmtools} 
\usepackage{thm-restate}

\usepackage[hidelinks]{hyperref}
\usepackage{cleveref}
\usepackage{tikz}
\usetikzlibrary{positioning}
\usepackage{todonotes}
\newcommand{\ZZ}{\mathbb{Z}}
\newcommand{\CC}{\mathbb{C}}
\newcommand{\RR}{\mathbb{R}}
\newcommand{\QQ}{\mathbb{Q}}
\newcommand{\FF}{\mathbb{F}}
\newcommand{\calU}{\mathcal{U}}
\newcommand{\calL}{\mathcal{L}}

\newcommand{\zetap}{\zeta_p}

\newcommand{\inner}[2]{\langle #1,\, #2 \rangle}
\newcommand{\cyc}{R}
\newcommand{\cycq}{R_q}
\newcommand{\Tr}{\operatorname{Tr}}
\newcommand{\norm}[1]{\lVert #1 \rVert }
\newcommand{\ring}[1]{R_{#1}}
\newcommand{\CCP}{\operatorname{CCP}}
\newcommand{\DCP}{\operatorname{DCP}}
\newcommand{\LWE}{\operatorname{LWE}}
\newcommand{\CyLWE}{\operatorname{CyLWE}}
\newcommand{\PSP}{\operatorname{PSP}}
\newcommand{\poly}{\operatorname{poly}}
\newcommand{\x}{\mathbf{x}}
\newcommand{\y}{\mathbf{y}}
\newcommand{\s}{\mathbf{s}}

\renewcommand{\a}{\mathbf{a}}

\newcommand{\yixin}[1]{{
  \color{purple}Yixin: #1
  }}

\newcommand{\TODO}[1]{{
  \color{olive}TODO: #1
  }}

\begin{document}

\title{Cyclotomic Cosets: Hidden Subgroup and Quantum Sieving Algorithm for Prime-Power Moduli}
\author{Mathias Boucher \and Pierre-Alain Fouque \and Yixin Shen}
\institute{Univ Rennes, Inria, CNRS, IRISA, Rennes, France}
\maketitle
\pretolerance=100\tolerance=2000\emergencystretch=3em

\begin{abstract}
The Learning With Errors (LWE) problem is a fundamental assumption in post-quantum cryptography. Regev established a quantum reduction from LWE to the Dihedral Coset Problem (DCP). Later, Brakerski et al. introduced the Extrapolated Dihedral Coset Problem (EDCP), proving its equivalence to LWE. However, unlike DCP, EDCP no longer admits a coset structure. This limits the direct application of \mbox{techniques for hidden subgroup problems.}

In this work, we introduce the Cyclotomic Coset Problem (CCP), a cyclotomic generalization of DCP that preserves an exact hidden-subgroup structure.  Let $\zeta_p$ be a primitive $p$-th root of unity, let $\pi=\zeta_p-1$, and write $q=p^t$ and $L=t(p-1)$.
We work over $R_q=\mathbb Z_q[\zeta_p] \cong \mathbb Z[\zeta_p]/(\pi^L)$, 
where the isomorphism follows from the total ramification identity $(p)=(\pi)^{p-1}$.
We exploit the resulting $\pi$-adic ideal chain to construct a quantum sieve that 
successively reduces phase states modulo $\pi^{L},\pi^{L-1},\ldots,\pi$. For every 
fixed prime $p$ and modulus $q=p^t$, our algorithm solves the CCP in time and 
sample complexity $2^{O_p(\log n\log q)}$, using polynomial quantum space. 
The sieve also applies to uniform EDCP and Gaussian $S\ket{\LWE}$, yielding 
quasi-polynomial time algorithms for all the above problems when $q=\poly(n)$. 
This extends the power-of-two EDCP sieve of Bai et al. (CRYPTO 2025) to a cyclotomic setting.

However, we emphasize that our result does not, by itself, yield a quasi-polynomial-time algorithm for standard LWE, because the currently known reduction produces only a limited number of approximate CCP states.
%
%The Learning With Errors (LWE) problem is a fundamental assumption in post-quantum cryptography. Combining 
%Regev’s quantum reductions for lattice problems (FOCS 2002) with his worst-case hardness proof 
%for LWE (STOC 2005) yields a reduction from LWE to the Dihedral Coset Problem. Later, Brakerski et al. (PKC'18) introduced the Extrapolated Dihedral Coset Problem (EDCP), proving its equivalence to LWE. However, unlike DCP, EDCP no longer admits a coset structure. This limits the applicability of techniques inspired by the solving of hidden subgroup problems.
%
%In this work, we introduce the Cyclotomic Coset Problem (CCP), a cyclotomic generalization of DCP that preserves an exact hidden-subgroup structure. Let $\zeta_p$ be a primitive $p$-th root of unity and $\pi=1-\zetap$. Since $p$ is totally ramified in $\QQ(\zetap)$, the element $\pi$ is the generator of the unique prime ideal above $p$, and $(p)=(\pi)^{p-1}$. Exploiting the resulting $\pi$-adic ideal chain, we develop a quantum sieve that progressively reduces phase states modulo $\pi^{L/},\pi^{L-1},\ldots,\pi$. For every fixed prime $p$ and modulus $q=p^t$, our algorithm solves the CCP in time and sample complexity $2^{O_p(\log n\log q)}$, using polynomial quantum space. This provides a cyclotomic analogue of the power-of-two EDCP sieve of Bai et al. (CRYPTO 2025). We emphasize that our result does not, by itself, yield a quasi-polynomial-time algorithm for standard LWE, because the currently known reduction produces only a limited number of approximate CCP states.\TODO{modifier }
\end{abstract}

\begin{keywords}
Learning With Errors, Dihedral Coset Problem, Extrapolated Dihedral Coset Problem, Hidden Subgroup Problem, \mbox{Cyclotomic Rings, Kuperberg's algorithm.}
\end{keywords}

\section{Introduction}

The Learning With Errors problem (LWE), introduced by Regev in~\cite{regev_lattices_2005}, is a cornerstone of post-quantum cryptography. In its search form, LWE asks to recover a secret vector $\s\in\mathbb{Z}_q^n$ from noisy linear equations

$$b_i=\langle \mathbf{a}_i,\s\rangle+e_i \bmod q,$$
where the vectors $\mathbf{a}_i$ are uniform in $\mathbb{Z}_q^n$ and the errors $e_i$ are sampled from a prescribed narrow distribution. The importance of LWE stems both from its versatility in cryptographic constructions and from quantum worst-case-to-average-case reductions relating it to fundamental lattice problems~\cite{regev_lattices_2005,Regev09}.

Quantum coset problems provide another perspective on the complexity of lattice problems. In earlier work, Regev established a connection between unique shortest-vector problems and the Dihedral Hidden Subgroup Problem~\cite{Regev02}. Its coset-state formulation, known as the Dihedral Coset Problem (DCP), asks to recover a 
secret $\s\in\mathbb{Z}_q^n$ from states of the form

$$\frac{1}{\sqrt 2} \left(\ket{0}\ket{\x}+ \ket{1}\ket{\x+\s} \right),$$
where $\x\in\mathbb{Z}_q^n$ is uniformly random. DCP is particularly appealing because its input states are exact coset states of order-two subgroups \mbox{of a generalized dihedral group.}

Childs and van Dam considered in \cite{childs_quantum_2007} the generalized hidden shift problem. It consists in finding a hidden shift $s \in \ZZ_q$ given quantum samples of the form 
\[\frac{1}{\sqrt{M}}\sum_{j = 0}^{M-1} \ket{j}\ket{x_i + j s}\]
for randomly chosen $x_i \in \ZZ_q$  and an integer $M$. They provided an efficient quantum algorithm based on "pretty good measurement" when $M$ is large with respect to $q$ ($M=q^\epsilon$ for a constant $\epsilon$).

Brakerski, Kirshanova, Stehl\'e, and Wen subsequently introduced the Extrapolated Dihedral Coset Problem (EDCP)~\cite{brakerski_learning_2018}. In EDCP, \mbox{the DCP state is replaced by}

$$\sum_j f(j)\ket{j}\ket{\x+j\s}$$
for a prescribed amplitude function $f$. They proved that suitable Gaussian and uniform variants of EDCP are equivalent to LWE under quantum polynomial-time reductions, up to parameter losses. This equivalence provides a useful quantum formulation of the computational content of LWE. However, except in particular cases, the set of translations

$$ \{(j\s,j):j\in\operatorname{supp}(f)\}$$
does not form a subgroup. General EDCP states therefore do not retain the \mbox{exact hidden-subgroup structure of DCP.}

Closely related quantum formulations arise by encoding the LWE error distribution in the amplitudes of a quantum state. In the $S\ket{\mathrm{LWE}}$ problem introduced by Chen, Liu, and Zhandry~\cite{CLZ22}, one is given uniformly random vectors $\mathbf{a}_i\in\mathbb{Z}_q^n$, together with quantum states of the following form

$$\sum_{e\in\mathbb{Z}_q} f(e) \ket{\langle\mathbf{a}_i,\s\rangle+e\bmod q},$$
and the objective is again to recover $\s$. Applying a quantum Fourier transform moves the secret into a phase and produces states closely related \mbox{to phase formulations of EDCP.}

\paragraph{Quantum sieving.}
DCP admits subexponential-time quantum algorithms due to Kuperberg~\cite{kuperberg_subexponential-time_2005,Kup13,doliskani_efficient_2020} and Regev~\cite{regev_subexponential_2004}. At a high level, these algorithms first transform coset states into phase-states of the form 
$1/\sqrt 2 (\ket{0}+ \omega_q^{\langle\y,\s\rangle}\ket{1}),$
with a label $\y$ which is uniform and known. They then repeatedly combine phase-states whose labels agree on selected blocks, thereby producing new states with increasingly divisible labels. The process eventually isolates enough \mbox{information to recover the secret.}

Recently, Bai, Jangir, Kirshanova, Ngo, and Youmans developed a quasi-polynomial-time quantum algorithm for EDCP over power-of-two moduli~\cite{bai_quasi-polynomial_2025} inspired by the "Simon-meets-Kuperberg" algorithm of Bonnetain and Naya-Plasencia \cite{bonnetain_hidden_2018}. Their algorithm uses a quasi-polynomial number of quantum samples and polynomial quantum space. It also yields a quasi-polynomial-time algorithm for Gaussian-$S\ket{\mathrm{LWE}}$ over power-of-two moduli. The algorithm exploits the following sequence of ideals:
$$\mathbb{Z}_{2^t} \supset 2\mathbb{Z}_{2^t} \supset\cdots\supset 2^{t-1}\mathbb{Z}_{2^t},$$
combining binary phase states to increase the regularity of their labels one level at a time. Chailloux and Hermouet~\cite{CH25} recovered the same result using a reduction from $\mathrm{S}\ket{\LWE}$ to the Inhomogeneous Short Integer Solution problem (ISIS). Although using a different technique, they encountered the same bottleneck as in~\cite{bai_quasi-polynomial_2025} when trying to generalize the \mbox{result to other prime-power moduli.}

Hidden-shift problems such as DCP can be viewed as hidden-subgroup problems. Thus, in parallel with \cite{bai_quasi-polynomial_2025}, Imran and Ivanyos \cite{Imran:2023tft} generalized the “Simon-meets-Kuperberg” algorithm \cite{bonnetain_hidden_2018} to a particular class of groups, namely nilpotent groups. These are groups for which there exists a subgroup chain allowing us to use the same sieving techniques as in the case where we have the 2-adic chain mentioned \mbox{in the paragraph just above.}

However, Imran and Ivanyos’ result cannot be directly applied to EDCP, since it does not have the structure of a hidden subgroup; nor can it be applied to a DCP with parameters $(n, q)$ for $q$ that is not a power of two, as stated in the following exercise in Rotman \cite[Exercise~5.41]{rotman1999introduction}:
\begin{center}
  \emph{The dihedral group of order $2q$ is nilpotent if and only if $q$ is a power of two.}
\end{center}
One might therefore ask whether there exists a specific nilpotent group that can be related to EDCP and DCP for any prime power $q$. This is what motivates the introduction of a new coset problem, which we have \mbox{named the Cyclotomic Coset Problem.}
 
We emphasize that our algorithms are not a direct application of Imran and Ivanyos~\cite{Imran:2023tft}. The exact algorithm of Imran and Ivanyos assumes access to a unitary $U$ implementing state preparation and to its inverse $U^{-1}$. In our model, the input consists only of independent quantum samples: neither the preparation circuit nor its inverse is available. Their exact theorem therefore does not apply directly. Moreover, the nilpotency class $t(p-1)$ of our cyclotomic group $G_{n,q,p}:=R_q^n\rtimes\mathbb{F}_p$ need not be constant. We instead use their zero-sum algorithm and describe the quantum state transformations explicitly, accounting for \mbox{sample consumption and working space.}

\subsection{Cyclotomic Coset}

Let $p$ be a prime, let $q=p^t$, and let $\zeta_p$ be a primitive $p$-th root of unity. We consider the cyclotomic ring $R_q:=\mathbb{Z}_q[\zeta_p]$. Multiplication by $\zeta_p$ defines an action of the additive group $\mathbb F_p$ on $R_q$. This allows us to define the semidirect product $G_{n,q,p}:=R_q^n\rtimes\mathbb{F}_p$ with multiplication
$$(\x,j)(\y,k) = \bigl( \x+\zeta_p^j\y, j+k \bigr).$$

Consequently, for every $\s\in R_q^n$, the set
$$
  H_{\s}
  :=
  \{
    (\lambda_j\s,j):
    j\in\mathbb{F}_p
  \}, \qquad \text{with } \lambda_j = \sum_{i = 0}^{j-1} \zetap^i,
$$
is a subgroup of $G_{n, q,p}$. Equivalently, it is the cyclic subgroup generated by $(\s,1)$. The subgroup order follows from the identity

$$\lambda_p = 1+\zeta_p+\cdots+\zeta_p^{p-1} = 0$$
shows that $H_{\s}$ has order $p$.
A left coset of $H_{\s}$ gives rise to the quantum state
$$\ket{\phi_{\x,\s}} =
  \frac{1}{\sqrt p} \sum_{j\in\mathbb{F}_p} \ket{j} \ket{\x+\lambda_j\s},$$
where $\x\in R_q^n$ is uniformly random. We define the Cyclotomic Coset Problem (CCP) as the 
problem of recovering $\s$ from \mbox{independent states of this form.}

When $p=2$, we have
$$\zeta_2=-1,\qquad \lambda_0=0,\qquad \lambda_1=1,$$
and $R_q=\mathbb{Z}_q$. Hence \mbox{CCP specializes exactly to DCP.}

We show that $G_{n,q,p}$ is a nilpotent group, in \Cref{le:cyclotomic_grp_is_nilpotent} and 
the associated hidden subgroup problem is close \mbox{to a generalized hidden-shift problem.} 

Note that our version of (CCP) assumes that the secret $\s$ is contained in $\ZZ_q^n$, a restricted subset of $R_q^n$. This choice is justified by the introduction of a cyclotomic version of LWE (CyLWE), which \mbox{also requires a restricted secret.}

\subsection{Organization and contributions}

Our work makes several contributions towards the study of coset problems over cyclotomic rings and their quantum algorithms.
\begin{itemize}

\item We first introduce new computational problems over cyclotomic rings, namely CyLWE and CCP, which can be viewed as analogues of LWE and DCP, respectively. We establish equivalences between these problems when $p$ is constant.

\item We define a quantum Fourier transform over cyclotomic rings, which allows us to give a rigorous definition of phase-state samples over such rings, in \mbox{analogy with Kuperberg's phase states.}

\item We construct a sieving algorithm for CCP that exploits these new phase-state samples. Our algorithm can be viewed as a specific instance of the framework introduced by Imran and Ivanyos~\cite{Imran:2023tft}.

\item We show that, for a certain range of parameters, our algorithm solves uniform EDCP and $S\ket{\LWE}$ in quasi-polynomial time, given a quasi-polynomial number of samples. This generalizes the results of~\cite{bai_quasi-polynomial_2025} from power-of-two \mbox{moduli to a broader setting.}

\end{itemize}

\begin{figure}[H]
    \centering
    \begin{tikzpicture}[
    node distance=2 and 4.5, on grid,
    algo/.style={draw,},
    refe/.style={draw, align=center, densely dotted},
    label/.style={midway}
]

  \node[algo] (LWE) {$\LWE_{n, q, \alpha}$};
  \node[algo, right =of LWE] (EDCP) {$\DCP_{n, q}$};
  \node[refe, right =of EDCP] (BJK) {\cite{bai_quasi-polynomial_2025} \\ for $q = 2^t$};
  \node[algo, below =of LWE] (CYLWE) {$\CyLWE_{n, q, p,\alpha}$};
  \node[algo, right =of CYLWE] (CCP) {$\CCP_{n, q, p}$};
  \node[refe, right =of CCP] (QP) { \Cref{al:CCP_solver}, \\ quasi-polynomial};

  \draw[->] (LWE) -- node[label, above] {\cite{brakerski_learning_2018}} (EDCP);
  \draw[<-, refe] (EDCP) -- (BJK);
  \draw[-, refe] (CYLWE) --node[label, above] {}  (CCP);
  \draw[<-, refe] (CCP) -- (QP);
  \draw[<->] (LWE) --node[label, right] {Proposition \ref{th_LWE_to_CyLWE}} (CYLWE);
  \draw[<->] (EDCP) --node[label, right] {Proposition \ref{pro:DCP_and_CCP}} (CCP);
\end{tikzpicture} 
\caption{Overview representing the different contributions of the paper. Note that we are 
considering an $n$-dimensional version of DCP. Moreover,~\cite{bai_quasi-polynomial_2025} 
shows that there exists a reduction from $\operatorname{EDCP}_{n, q, M}$ to $\DCP_{n, q}$.}
\end{figure}

\section{Preliminaries}

\paragraph{Asymptotic notations.} 
We define the notations $O(\cdot), \Omega(\cdot), \poly(\cdot)$ in the standard way, with respect to the dimension $n$ of the lattice and the security parameter $\kappa$. We say a function $f: \mathbb{N} \to (0, 1]$ is negligible if for all positive polynomials $p(\cdot)$ there exists an integer $N$ such that, $\forall n > N, f(n) < 1 / p(n)$. We say a positive function is quasipolynomial if it is upper bounded by $2^{O(\log^c n)}$ \mbox{for some constant $c\geq1$.}

\subsection{Cyclotomic ring}

We introduce the notation and definitions associated with the cyclotomic ring used throughout the paper. Let $p$ be a prime number and let $q = p^t$ with $t\geq1$. We denote by $\zetap$ a primitive $p$-th root of unity, and define the cyclotomic field $K = \QQ[\zetap]$ as an extension of $\QQ$ generated by $\zetap$. Equivalently, the cyclotomic field can be viewed as the quotient $\QQ[X]/(\Phi_p (X))$ where $\Phi_p (X) := 1 + X + \dots + X^{p-1}$ is the $p$-th cyclotomic polynomial. Hence we denote $R := \ZZ [\zetap]$ the ring of integers of $K$. We write $R_q = R/q R$. We can view $R_q$ as $\ZZ_q [\zetap] = \ZZ_q [X]/(\Phi_p (X))$, a cyclotomic extension of $\ZZ_q$. This allows us to represent elements of $\cycq$ by their coefficients with respect to the basis $(1, \zetap, \dots,\zetap^{p-2})$,
\[\iota: \ZZ_q^{p-1} \to \cycq, \qquad (x_0, \dots, x_{p-2}) \mapsto \sum_{i = 0}^{p-2} x_i \zetap^i\]
and we can state that $\iota$ is an isomorphism of $\ZZ_q$-modules.

Using the same arguments as in the previous paragraph, one can show that $R := \ZZ [\zetap]$ and $\ZZ^{p-1}$ are isomorphic as $\ZZ$-modules. This directly induces a notion of distance on the cyclotomic ring. Indeed, for $e \in \ZZ [\zetap]$, we use the slight abuse of notation $\norm{e}_2$ to denote the $\ell_2$-norm of its representation in $\ZZ^{p-1}$ with respect to the basis $(1, \zetap, \dots, \zetap^{p-2})$.

Finally, we denote by $\inner{\cdot}{\cdot}$ the symmetric bilinear form on $\cycq^n$ such that, for ${\mathbf{x}} = (x_1, \dots, x_n)$ and ${\mathbf{y}} = (y_1, \dots, y_n)$ in $\cycq^n$, we have

\[\inner{{\mathbf{x}}}{{\mathbf{y}}}:=\sum_{i = 1}^{n} x_i y_i.\]

\paragraph{$\pi$-decomposition.} The element $\pi = \zetap - 1$ is a prime element of $R$, and there exists $u_0 \in R^\times$ invertible such that
\[p = u_0 \pi^{p-1},\]
and therefore $q = u_0^t \pi^L$ with $L = t (p-1)$, we say that $p$ is ramified. Consequently, the two quotient rings are canonically isomorphic:
\[R_q = R/ q R \cong R/\pi^L R.\]
For this reason, we will sometimes use the abusive notation $R_L$ to talk about $R_{\pi^L}$. We have $|\ring{k}|=p^k$: multiplication by $\pi^{k-1}$ identifies $R/\pi R$ with $\pi^{k-1}R/\pi^kR$, so each successive quotient has $p$ elements. This will allow us to consider intermediate quotient rings that would be difficult to represent without the $\pi$-decomposition:
\[R_q = R_L \to R_{L-1} \to \dots \to R_1 = \FF_p.\]

\paragraph{Field trace.}The field trace corresponds to the linear map $ \Tr : K \to \QQ$.
More precisely, suppose $K/\QQ$ is a Galois extension and let $x$ be an element of $K$. Then the trace of $x$ is the sum of all Galois conjugates of $x$:
\begin{equation*}
  \Tr(x) = \sum_{\sigma \in \operatorname{Gal}(K/\QQ)} \sigma(x).
\end{equation*}

Let $K = \QQ(\zetap)$ be a cyclotomic extension of $\QQ$. The trace is the linear map from $K$ to $\QQ$ such that
\begin{equation*}
  \Tr(1) = p-1, \qquad \Tr(\zetap^k) = -1 \quad \text{for all } 1 \leq k \leq p-1 .
\end{equation*}
This follows from the fact that $\operatorname{Gal}(K/\QQ) = \{\sigma_u \in \operatorname{Aut}(K) : \sigma_u(\zetap) = \zetap^u,\ u \in \FF_p^\times\}$. We immediately see that $\Tr(1) = p-1$. For $1 \leq k \leq p-1$, we obtain
\begin{equation*}
  \Tr(\zetap^k) = \sum_{u \in \FF_p^\times} \sigma_u(\zetap^k) = \zetap + \cdots + \zetap^{p-1} = -1.
\end{equation*}

\subsection{Lattices and cyclotomic lattices}
\label{sec:cyc lattices}

\paragraph{Discrete gaussian distribution.} For any $r > 0$ and $p >2$, we define the Gaussian function $\rho_r(x) := \exp\!\left(-\pi \norm{x}_2^2 / r^2\right)$ for $x \in \RR^{n}$. For $x \in \cycq^n$, we also use the abusive notation $\rho_r (x)$ to denote the quantity $\rho_r (\iota(x))$ with $\iota(x) \in \RR^{n (p-1)}$. For an $n$-dimensional lattice $\Lambda$, we write $D_{\Lambda, r}$ for
the discrete Gaussian distribution over $\Lambda$ with density proportional to $\rho_r(\cdot)$. For the coefficient lattice $\mathbb Z^{n(p-1)}$, the Gaussian weight factors over coordinates, so the resulting distribution is a product of independent one-dimensional discrete Gaussians~\cite{aggarwal_note_2013}. This factorization is not asserted for arbitrary lattices. We also extend the definition of Gaussian distribution to $R^n$ by considering its injection in $\RR^{n(p-1)}$ related to the basis $(1, \zetap, \dots, \zetap^{p-2})$.

The tails of a Gaussian distribution are negligible \mbox{compared to its central mass.}

\begin{lemma}[\cite{banaszczyk_new_1993}, Lemma 1.5]
\label{gaussian}
  For any $n$-dimensional lattice $\Lambda$ and $r>0$,
  \[\rho_r\!\left(\Lambda \setminus B_n (0,\, \sqrt{n}\, r)\right)
    < 2^{-\Omega(n)}\, \rho_r(\Lambda).\]
\end{lemma}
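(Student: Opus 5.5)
The plan is the standard proof of Banaszczyk's tail bound, via a scaling and Poisson-summation argument. Write $\rho=\rho_r$. By scaling $\Lambda\mapsto \Lambda/r$ we may assume $r=1$. The two ingredients are:
\begin{itemize}
\item the Poisson summation formula $\rho(\Lambda+c)=\det(\Lambda)^{-1}\sum_{y\in\Lambda^*}\rho(y)e^{2\pi i\langle y,c\rangle}$, and
\item the resulting monotonicity $\rho_s(\Lambda)\le s^n\rho(\Lambda)$ for every $s\ge1$.
\end{itemize}
The monotonicity follows because the Fourier transform of $\rho_s$ is $s^n\rho_{1/s}\le s^n\rho$ pointwise on the nonnegative dual sum, so
\[
\rho_s(\Lambda)=\det(\Lambda)^{-1}s^n\rho_{1/s}(\Lambda^*)\le s^n\det(\Lambda)^{-1}\rho(\Lambda^*)=s^n\rho(\Lambda).
\]

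Fix a parameter $s>1$, to be chosen later. For every $x$ with $\norm{x}_2\ge\sqrt n$, we have
\[
\rho_s(x)=\rho(x)\exp\!\bigl(\pi\norm{x}_2^2(1-s^{-2})\bigr)\ge\rho(x)\,e^{\pi n(1-s^{-2})}.
\]
Summing over $\Lambda\setminus B_n(0,\sqrt n)$ and using monotonicity gives
\[
\rho\!\left(\Lambda\setminus B_n(0,\sqrt n)\right)\le e^{-\pi n(1-s^{-2})}\rho_s(\Lambda)\le \bigl(s\,e^{-\pi(1-s^{-2})}\bigr)^n\rho(\Lambda).
\]
It remains to choose $s$ so that $c:=s\,e^{-\pi(1-s^{-2})}<1$; then the bound is $c^n=2^{-\Omega(n)}$. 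Optimizing over $s$ gives $s^2=2\pi$, and then
\[
c=\sqrt{2\pi e}\,e^{-\pi}\approx 0.18<1,
\]
which is enough. One could also simply take $s=2$, for which $2e^{-3\pi/4}\approx0.19$.

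The main obstacle is stating Poisson summation with the correct normalization, namely $\widehat{\rho_s}=s^n\rho_{1/s}$ under the convention $\rho(x)=e^{-\pi\norm{x}^2}$. Once that is written down the rest is routine. The result is classical, so in the paper one may simply cite Banaszczyk as done; the argument above is the sketch one would reproduce.
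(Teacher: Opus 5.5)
Your proof is correct and is essentially Banaszczyk's own argument for Lemma~1.5: Poisson summation gives $\rho_s(\Lambda)\le s^n\rho(\Lambda)$ for $s\ge 1$, then you compare $\rho$ with $\rho_s$ outside the ball and optimize at $s^2=2\pi$ to get the constant $\sqrt{2\pi e}\,e^{-\pi}$; the paper does not reprove this lemma and simply cites it. The only implicit hypothesis is that $\Lambda$ has full rank $n$, so that Poisson summation with $\det(\Lambda)$ applies, which is what ``$n$-dimensional lattice'' means here; the strict inequality also follows, since $\rho_{1/s}(\Lambda^*)<\rho(\Lambda^*)$ for $s>1$.
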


\noindent
Note that this result holds for cyclotomic sublattices of $R^n$ \mbox{using their coefficient embedding in $\RR^{n(p-1)}$.}

\iffalse
\paragraph{$q$-ary cyclotomic lattices.} Let $A \in \mathcal{U}(\cycq^{m \times n})$ be a matrix and
  define the $m$-dimensional $q$-ary lattice over cyclotomic integers
  \[
    \Lambda_q(A) = \{y \in  R^m \mid y= Ax \mod q, x \in R_q^n\}.
  \]

  \begin{restatable}[Inspired by \cite{wen_thesis_2018} Lemma 2.18]{lemma}{minimaldistance}
  \label{le_minimal_distance}
  Let $A \in \mathcal{U}(\cycq^{m \times n})$ be a random matrix with $m \geq n$. The minimum distance of $\Lambda_q(A)$ with respect to norm 2 related to the basis $(1, \zetap, \dots, \zetap^{p-2})$ defined in the preliminaries satisfies,
  \[
    \lambda_1(\Lambda_q(A))
    \geq \min (\frac{1}{6} \left(\frac{p-1 }{p}\right)^{1/2} q^{(m-n)/m},q)
  \]
  with probability $1 - 2^{-(p-1)m}$.
\end{restatable}

The proof is given in \Cref{sec:missing proofs}.

\yixin{j'ai vérifié la preuve de chatgpt, il est correct. Je l'ai modifié un peu et l'a mis dans l'annexe.}

\fi

\subsection{The cyclotomic group}

Let $p$ be a prime number and $q = p^t$ a power of $p$. We call $\zetap$ a primitive $p$-th root of unity and denote the cyclotomic ring by $R_q := \ZZ_q [\zetap]$, and we fix a dimension $n\geq1$. We are interested here in the group
\[G_{n, q, p} = R^n_q \rtimes \FF_p. \]
This is the group $(R^n_q \rtimes \FF_p, \star)$ where the product $\star$ satisfies:
\[(\x_1, j_1) \star (\x_2, j_2) := (\x_1 + \zetap^{j_1} \x_2, j_1 + j_2),  \text{ for } \x_1, \x_2 \in R_q^n \text{ and } j_1, j_2 \in \ZZ_p\]
One can verify that the internal composition law $\star$ indeed defines \mbox{a group structure on $G_{n,q,p}$.}

\paragraph{Nilpotent group.} Let $G$ be a group. Let $A$ and $B$ be two subgroups of $G$; we denote by $[A, B]$ the subgroup generated by the commutator $[a, b] := a b a^{-1} b^{-1}$ with $a$ in $A$ and $b$ in $B$. We call the lower central series the sequence of subgroups $C^k (G)$ defined for every positive integer $k$ by $C^1 (G)=G$ and $C^{k+1} (G) = [G, C^k (G)]$. We say that $G$ is \emph{nilpotent} if there exists an integer $k$ such that $C^k (G) = \{e\}$ (the trivial group). Moreover, the nilpotency class of $G$ is the smallest integer $k$ such that $C^{k+1} (G) = \{e\}$. The lemma below allows us to state that the cyclotomic \mbox{group is a nilpotent group.}

\begin{lemma}\label{le:cyclotomic_grp_is_nilpotent}
  Let $p$ be a prime, $q = p^t$ with $t\geq1$, and let $n\geq1$ be a dimension. The cyclotomic group $G_{n, q, p}$ is a nilpotent group, of nilpotency class $t (p-1)$.
\end{lemma}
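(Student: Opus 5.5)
The plan is to compute the lower central series explicitly and show that $C^{k}(G_{n,q,p}) = \pi^{k-1} R_q^n \times \{0\}$ for every $k\geq 2$. Here $\pi^{k-1}R_q^n$ denotes the image of $\pi^{k-1}R^n$ in $R_q^n \cong (R/\pi^L R)^n$. Since $\pi^{L}R_q = 0$ while $\pi^{L-1}R_q\neq 0$ (because $|R_k|=p^k$), we get $C^{L+1}=\{e\}$ and $C^{L}\neq\{e\}$. This gives nilpotency of class exactly $L=t(p-1)$, with the case $L=1$ (i.e.\ $p=2$, $t=1$, where $G$ is abelian) covered by the same formula.

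\textbf{Step 1: commutator formulas.} In $G=G_{n,q,p}$ one has $(\x,j)^{-1}=(-\zeta_p^{-j}\x,-j)$. A direct computation gives, for $a=(\x,j)$ and $b=(\y,k)$,
\[
[a,b] = \bigl((1-\zeta_p^{k})\x - (1-\zeta_p^{j})\y,\ 0\bigr).
\]
The second coordinate vanishes because $\FF_p$ is abelian, so every commutator lies in the abelian normal subgroup $N=R_q^n\times\{0\}$. For $g=(\y,k)$ and $h=(\mathbf z,0)\in N$, this specializes to $[g,h]=(-(1-\zeta_p^k)\mathbf z,0)$. Two facts about the ring are needed. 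First, $1-\zeta_p^k=-\pi(1+\zeta_p+\dots+\zeta_p^{k-1})$, and for $k\not\equiv 0 \pmod p$ the second factor is a unit of $R$, since $(1-\zeta_p^k)/(1-\zeta_p)$ and its inverse both lie in $R$ (the standard cyclotomic unit argument). Hence $(1-\zeta_p^k)R=\pi R$ for $k\in\FF_p^\times$, and $1-\zeta_p^k=0$ for $k=0$. Second, $\pi R$ is an ideal, so every $\pi$-multiple in $R_q$ is realized.

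\textbf{Step 2: induction.} For $k=2$, the formula shows $[G,G]\subseteq \pi R_q^n\times\{0\}$. Conversely, taking $a=(\x,0)$ and $b=(\mathbf 0,1)$ gives $[a,b]=((1-\zeta_p)\x,0)=(-\pi\x,0)$, and as $\x$ ranges over $R_q^n$ this covers all of $\pi R_q^n\times\{0\}$. Now assume $C^k=\pi^{k-1}R_q^n\times\{0\}$. By the specialized formula, each generator $[g,h]$ of $C^{k+1}$ with $h\in C^k$ has the form $(-(1-\zeta_p^{j})\mathbf z,0)$, which lies in $\pi^{k}R_q^n\times\{0\}$. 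This set is a subgroup, so it contains the generated group $C^{k+1}$. Conversely, choosing $g=(\mathbf 0,1)$ yields $(-\pi\mathbf z,0)$ for arbitrary $\mathbf z\in\pi^{k-1}R_q^n$, which covers $\pi^kR_q^n\times\{0\}$. A small care point: the paper defines $[a,b]=aba^{-1}b^{-1}$ and $C^{k+1}=[G,C^k]$, so the orientation of the commutator matters, but only up to sign, which is harmless.

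\textbf{Step 3: conclusion.} Using $R_q\cong R/\pi^LR$ and $|R_k|=p^k$ from the preliminaries, $\pi^{m}R_q^n\cong (\pi^mR/\pi^LR)^n$ has size $p^{n(L-m)}$. It is therefore nonzero for $m=L-1$ and zero for $m=L$, giving class exactly $L$. The main obstacle is the unit fact $(1-\zeta_p^k)R=\pi R$, together with getting the exact generated subgroup rather than just an inclusion. Both are handled by the cyclotomic-unit argument and by the explicit choice $g=(\mathbf 0,1)$ for the reverse inclusion.
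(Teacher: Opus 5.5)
Your proposal is correct and follows essentially the same route as the paper: both prove $C^k(G)=\pi^{k-1}R_q^n\rtimes\{0\}$ by induction from the explicit commutator formula, get the reverse inclusion from commutators with $(\mathbf 0,1)$, and conclude from $\pi^{L}R_q^n=0\neq\pi^{L-1}R_q^n$. Two harmless slips: $[(\mathbf 0,1),(\mathbf z,0)]=(\pi\mathbf z,0)$, not $(-\pi\mathbf z,0)$. Also, the unit property of $(1-\zeta_p^k)/\pi$ is not needed, since only the divisibility $\pi\mid 1-\zeta_p^k$ is used.
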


\begin{proof}
  Let $G = R_q^n \rtimes \FF_p$ and $L=t(p-1)$. We shall show by induction that, for every integer $k \geq 2$,
  \(C^k(G)=\pi^{k-1} R_q^n \rtimes \{0\}.\)
  In the case $k=2$, we have for $(\x_1,j_1),(\x_2,j_2)\in G$:
  \[ [(\x_1,j_1),(\x_2,j_2)] = \left((1-\zetap^{j_2})\x_1+(\zetap^{j_1}-1)\x_2,0\right).\]
  Since $1-\zetap^{j}=-\pi(1+\zetap+\cdots+\zetap^{j-1})$, we obtain \(C^2(G)\subseteq \pi R_q^n\rtimes\{0\}.\)
  Conversely, for every $\x\in R_q^n$,
  \begin{equation}\label{eq:nilpotence}
     [(0,1),(\x,0)]=(\pi \x,0),
  \end{equation}
  hence \(C^2(G)=\pi R_q^n\rtimes\{0\}\). Now suppose that, for some $k\geq2$,
  \(C^k(G)=\pi^{k-1} R_q^n\rtimes\{0\}\).
  Commuting an element of this subgroup with any element of $G$ multiplies its vector by $\zetap^j-1$, so $C^{k+1}(G)\subseteq\pi^k R_q^n\rtimes\{0\}$. Equation~\eqref{eq:nilpotence}, applied to $\x\in\pi^{k-1}R_q^n$, proves the reverse inclusion.

  Finally, $(q)=(\pi^L)$ gives $\pi^L R_q^n=0$ and $\pi^{L-1}R_q^n\neq0$. Therefore $C^{L+1}(G)=\{e\}$ and, for $L\geq2$, $C^L(G)\neq\{e\}$. If $L=1$, $G$ is nontrivial and abelian. In both cases its nilpotency class is exactly $L=t(p-1)$. \qed
\end{proof}

\subsection{Computational problems}

\paragraph{Learning with errors problems.}We introduce a variable $\kappa$ to relate all the parameters involved in the definition below.  Indeed, $n$, $q$, $p$ are functions of $\kappa$, but we omit the variable $\kappa$ for clarity.

\begin{definition}[Search LWE, \cite{regev_lattices_2005}]
  Let $p$ be a prime, $q = p^t$, $n$ be a positive integer, and $\alpha > 0$.
  Fix a secret $\s \in \ZZ_q^n$.
  Given $\ell$ samples of the form
  \[\left(\a,\; b = \inner{\a}{\s}  + e \bmod q\right),\]
  with $\a \leftarrow \calU(\ZZ_q^n)$ and $e \leftarrow D_{\ZZ,\, \alpha q}$, denoted by $\LWE^\ell_{n,q,\alpha}$,the {search LWE problem},
  asks to recover $\s \in \ZZ_q^n$.
\end{definition}

We now introduce a version of $\LWE$ on the cyclotomic ring called CyLWE. 
The rational-secret restriction and the coefficient error distribution distinguish this problem from the usual Ring-LWE (RLWE) setting discussed in \cite{wen_module_2026}. For fixed $p$, the ring degree is the constant $p-1$, and $n$ is an independent vector dimension. In common RLWE parameter families, the ring degree \mbox{grows with the security parameter.} 

\begin{definition}[Cyclotomic LWE]
  Let $p$ be a prime, $q = p^t$, $n$ be a positive integer, and $\alpha > 0$.
  Fix a secret $\s \in \ZZ_q^n$.
  Consider $\ell$ samples of the form
  \[\left(\a,\; b = \inner{\a}{\s}  + e \bmod q\right),\]
  with $\a \leftarrow \calU(\cycq^n)$ and $e \leftarrow D_{\cyc,\, \alpha q}$. The {cyclotomic LWE} problem, denoted by $\CyLWE^\ell_{n,q,p,\alpha}$, asks to recover ${\mathbf{s}} \in \ZZ_q^n$, given $\ell$ samples.
\end{definition}

\paragraph{Hidden subgroup problem.}

\begin{definition}[Hidden Subgroup Problem]
  Let $G$ be a finite group and $X$ a finite set. Fix $H$ a hidden subgroup of $G$. Given a function $f : G \to X$ that hides $H$ (i.e., $f(g)=f(g\prime)$ if and only if $gH=g\prime H$), provided by means of an oracle using $O(\log|G| + \log|X|)$ bits, the Hidden Subgroup Problem (HSP) asks to recover a generating set of $H$.
\end{definition}

In practice, the oracle used in the HSP is assumed to be the existence of a quantum gate $U_f$ such that $\ket{g}\ket{0} \mapsto \ket{g}\ket{f(g)}$. A quantum use of this gate is as follows. First, this consists of generating a quantum superposition of the elements of $G$ and computing $f$ in a second register by means of $U_f$.
\[ \frac{1}{\sqrt{|G|}} \sum_{g \in G} \ket{g}\ket{0} \to \frac{1}{\sqrt{|G|}} \sum_{g \in G} \ket{g} \ket{f(g)}.\]
The second register is then measured. By the hiding promise, the preimage of an output $x \in X$ of $f$ can be viewed as a left coset of $H$, i.e., there exists $g_0 \in G$ such that $f^{-1}\{x\} = g_0 H$. The resulting state is therefore
\[\frac{1}{\sqrt{|H|}} \sum_{h \in H} \ket{g_0 h}.\]
Thus, a quantum instance that is solved in place of the classical instance of the \mbox{HSP is stated as follows.}

\begin{definition}[Coset Problem]
  Let $G$ be a group and $H$ a hidden subgroup. Consider samples of the form
  \[\frac{1}{\sqrt{|H|}} \sum_{h \in H} \ket{g_i h},\]
  with the $g_i$ drawn independently and uniformly from $G$. Given a certain number of samples from $G$, the Coset Problem asks to recover a subset generating $H$.
\end{definition}

For hidden subgroups generated by $(\s,1)$, this gives the dihedral coset problem over $\ZZ_q^n\rtimes_{-1}\ZZ_2$, and the cyclotomic coset problem over $G_{n,q,p}$, respectively.

\paragraph{Coset problems.} We now introduce two quantum hidden-shift problems over finite rings, namely the Dihedral Coset Problem and its cyclotomic generalization. Initially, the dihedral coset problem was stated over $\ZZ_q$, but \cite{doliskani_efficient_2020} discusses the $\DCP$ over $\ZZ_q^n$. We have therefore chosen to state it directly over $\ZZ_q^n$ too.

\begin{definition}[Dihedral Coset Problem]
  Let $p$ be a prime, $q=p^t$ with $t\geq1$, and $n\geq1$.
  Fix a secret $\s \in \ZZ_q^n$.
  Consider quantum samples $\{\ket{\phi_k}\}_{k=0}^{\ell-1}$ of the form
  \[
    \ket{\phi_k}
    = \frac{1}{\sqrt{2}}\left(\ket{0}\ket{\x_k} + \ket{1}\ket{\x_k + \s}\right),
  \]
where the $\x_k$ are independent uniform elements of $\ZZ_q^n$. The {Dihedral Coset Problem} over $\ZZ_q^n$, denoted by $\DCP^\ell_{n,q,p}$, asks to recover $\s \in \ZZ_q^n$, given $\ell$ samples.
\end{definition}

\begin{definition}[Cyclotomic Coset Problem]
  Let $p$ be a prime, $q=p^t$ with $t\geq1$, and $n\geq1$.
  Fix a secret $\s \in \ZZ_q^n$. We identify $\s \in \ZZ_q^n$ with its coordinatewise constant embedding in $\cycq^n$.
  Consider quantum samples $\{\ket{\phi_k}\}_{k=0}^{\ell-1}$ of the form
  \[
    \ket{\phi_k}
    = \frac{1}{\sqrt{p}} \sum_{j\in \FF_p} \ket{j}\,
      \ket{\x_k + \lambda_j \s \bmod q},
  \]
where $\lambda_0=0$, $\lambda_j=\sum_{i=0}^{j-1}\zetap^i$ for $1\leq j<p$, and the $\x_k$ are independent uniform elements of $\cycq^n$. The {Cyclotomic Coset Problem}, denoted by $\CCP^\ell_{n,q,p}$, asks to recover $\s \in \ZZ_q^n$, given $\ell$ samples.
\end{definition}

These problems are closely related to the EDCP introduced in
\cite{brakerski_learning_2018}; for $p=2$, CCP is DCP and hence uniform EDCP with two branches. 
Furthermore, it is clear that these two problems are specific variants of the Coset Problem. For DCP, the ambient group is the generalized dihedral group, while CCP uses $G_{n,q,p}$. Indeed, $\lambda_{j+k}=\lambda_j+\zetap^j\lambda_k$ (indices modulo $p$), so $(\s,1)^j=(\lambda_j\s,j)$ and $H_\s=\langle(\s,1)\rangle$ has order $p$. The left coset $(\x,0)H_\s$ gives exactly the stated \mbox{CCP state after swapping registers.}

\subsection{Quantum Fourier Transform}

We briefly recall the definition of the Quantum Fourier Transform and its efficient \mbox{implementation as a quantum circuit.}

\begin{definition}[Quantum Fourier Transform]
Let $N\geq1$ be an integer and let $\mathcal{H}_N$ be the $N$-dimensional Hilbert space spanned by the computational basis
${\ket{0},\ldots,\ket{N-1}}$. The \emph{Quantum Fourier Transform} (QFT) over $\mathbb{Z}_N$ is the unitary operator $\operatorname{QFT}_N$ defined on the computational basis states by
\[
\operatorname{QFT}_N (\ket{x}) =
\frac{1}{\sqrt{N}} \sum_{y=0}^{N-1} e^{2\pi ixy/N}\ket{y},
\qquad \forall x\in\{0,\ldots,N-1\},
\]
and extended \mbox{to arbitrary states by linearity.}
\end{definition}

\begin{lemma}[\cite{dewolf2026quantumcomputinglecturenotes}, Implementation of the QFT]
Let $N\geq2$ and $0<\varepsilon<1$. The quantum Fourier transform $\operatorname{QFT}_N$ admits a circuit approximation with operator-norm error at most $\varepsilon$ using $\operatorname{poly}(\log N,\log(1/\varepsilon))$ elementary gates from a fixed universal gate set. When $N$ is a power of a fixed prime, its ideal radix expansion uses $O(\log^2N)$ \mbox{constant-dimensional Fourier and controlled-phase gates.}
\end{lemma}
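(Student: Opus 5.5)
The statement to prove is the QFT implementation lemma. I will argue separately for the general case and for the case where $N$ is a power of a fixed prime.

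\textbf{General $N$.} For arbitrary $N$, I would cite the standard approximate QFT constructions, e.g.\ Kitaev's phase-estimation-based circuit or the Hales--Hallgren construction. These reduce $\operatorname{QFT}_N$ to a QFT over a power of two $2^m$ with $2^m \geq N^2/\varepsilon$, plus amplitude-amplification and reversible arithmetic. The power-of-two QFT consists of $O(m^2)$ Hadamard and controlled-phase gates. Each controlled phase is then approximated to precision $\varepsilon/\poly(m)$ from a fixed universal gate set using Solovay--Kitaev, at a cost of $\operatorname{polylog}(m/\varepsilon)$ gates per rotation. Errors in operator norm add up linearly along the circuit, by the triangle inequality and unitary invariance. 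So the total error stays below $\varepsilon$ and the gate count is $\poly(\log N,\log(1/\varepsilon))$.

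\textbf{Prime powers $N=r^m$ with $r$ fixed.} I would give the radix-$r$ analogue of the Cooley--Tukey/Coppersmith factorization. Write $x=\sum_{a<m} x_a r^a$ and $y=\sum_{b<m} y_b r^b$ with digits in $\{0,\dots,r-1\}$. Terms with $a+b\geq m$ vanish modulo $N$, so the phase splits as
\[
e^{2\pi i xy/N}=\prod_{a+b<m} e^{2\pi i x_a y_b r^{a+b-m}}.
\]
This gives the product formula
\[
\operatorname{QFT}_N\ket{x}=\bigotimes_{b}\frac{1}{\sqrt r}\sum_{y_b} e^{2\pi i y_b\,(0.x_{m-1-b}\dots x_0)_r}\ket{y_b},
\]
where the digit expansion is in base $r$. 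This tensor product is realized as follows: apply an $r$-dimensional $\operatorname{QFT}_r$ on each digit register, followed by controlled phases $\ket{u}\ket{v}\mapsto e^{2\pi i uv/r^{k}}\ket{u}\ket{v}$ between digit pairs, then a final digit reversal. There are $m$ constant-dimensional Fourier gates and $O(m^2)$ controlled-phase gates. Since $r$ is fixed, $m=O(\log N)$, which gives the claimed $O(\log^2 N)$ bound. Correctness follows by checking the circuit on basis states, exactly as in the binary case.

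\textbf{Main obstacle.} The delicate step is the general-$N$ case with a fixed gate set. There the nontrivial ingredient is the reduction from non-smooth $N$ to a power-of-two QFT, and I would simply invoke the cited construction rather than reprove it.
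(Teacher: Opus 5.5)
Your proposal is correct and is essentially the standard argument behind this cited lemma. The paper gives no proof of its own and defers to de Wolf's lecture notes, where the $O(\log^2 N)$ circuit over $\ZZ_{2^n}$ is derived from the same product formula you generalize digit-wise to radix $r$; for general $N$ you, like the paper, rely on the literature (Kitaev/Hales--Hallgren plus Solovay--Kitaev). You should state explicitly that each $\operatorname{QFT}_r$ is interleaved with its own controlled phases, controlled by the not-yet-transformed lower digits $x_a$, because applying all Fourier gates first would make the phases depend on pairs of output digits.
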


\section{Quantum Fourier Transform over cyclotomic rings}

In this section, we define the quantum Fourier transform (QFT) on $\ring{L} = R/\pi^L R$ and provide an efficient construction of it. We then introduce phase-states, which are quantum states directly obtained from CCP samples using the QFT and which will be useful \mbox{for designing a sieving algorithm.}

\subsection{The duality in $R_L$}

In this part, we provide an explicit description of the dual of $\ring{L}$ and its characters, using the field trace. The characters are indexed by the elements of $\ring{L}$ \mbox{and satisfy several useful properties.}

\begin{lemma}\label{le:bilinear_trace}
  Fix $K = \QQ(\zetap)$ and let $L$ be a positive integer. The map
  \begin{equation*}
    B_L : \ring{L} \times \ring{L} \to \QQ/\ZZ, \qquad B_L(x,y) := \Tr\left(\frac{xy}{p\pi^{L-1}}\right) \bmod \ZZ
  \end{equation*}
  is a non-degenerate symmetric $\ZZ$-bilinear form.
\end{lemma}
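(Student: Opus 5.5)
To prove the lemma, I will lift $x,y$ to representatives $\tilde x,\tilde y\in R$ and work with the inverse different. I will check three things in order: that $B_L$ is well defined, that it is bilinear and symmetric, and that it is non-degenerate. The key algebraic input is the different of $R=\ZZ[\zetap]$, which is $\mathfrak{D}=(\pi)^{p-2}$, together with the fact that $p=u_0\pi^{p-1}$ with $u_0\in R^\times$. Combining these gives $p\pi^{L-1}=u_0\pi^{L+p-2}$. Hence $\frac{1}{p\pi^{L-1}}R=\pi^{-L}\mathfrak{D}^{-1}$, up to a unit, and I set $\mathfrak{D}^{-1}=\{z\in K:\Tr(zR)\subseteq\ZZ\}=\pi^{-(p-2)}R$. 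If I prefer not to quote the different, I can verify $\Tr(\pi^{-(p-2)}R)\subseteq\ZZ$ directly from the given trace formulas. Indeed, $\Tr(\zetap^k)\in\{p-1,-1\}$ gives $\Tr(R)\subseteq\ZZ$ and $\Tr(\pi R)\subseteq p\ZZ$, from which one obtains $\Tr(\frac{1}{p}\pi^{p-2}\cdot\pi R)\subseteq\ZZ$ after rescaling by $u_0$.

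\textbf{Well-definedness.} If $\tilde x$ is changed by an element $\pi^L r$, the product changes by $\pi^L r\tilde y$. The argument of the trace then changes by
\[\frac{\pi^L r\tilde y}{p\pi^{L-1}}=\frac{\pi r\tilde y}{p}=u_0^{-1}\pi^{2-p}r\tilde y\in\mathfrak{D}^{-1}.\]
Its trace is therefore an integer, so the value modulo $\ZZ$ does not change, and by symmetry the same holds for $\tilde y$. Symmetry of $B_L$ follows from commutativity of $R$, and bi-additivity (hence $\ZZ$-bilinearity) follows from linearity of $\Tr$.

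\textbf{Non-degeneracy.} This is the main step. Suppose $x\in R_L$ with lift $\tilde x$ satisfies $B_L(x,y)=0$ for all $y$. Then $\Tr\bigl(\tilde x\,\frac{y}{p\pi^{L-1}}\bigr)\in\ZZ$ for all $y\in R$, which says that $\frac{\tilde x}{p\pi^{L-1}}\in\mathfrak{D}^{-1}=\pi^{2-p}R$. Multiplying out gives $\tilde x\in p\pi^{L-1}\pi^{2-p}R=\pi^{L}R$, so $x=0$ in $R_L$. Since $R_L$ is finite and the form is symmetric, non-degeneracy on one side suffices. Equivalently, the map $x\mapsto B_L(x,\cdot)$ is injective from $R_L$ into its character group, which has the same cardinality $p^L$, so it is an isomorphism.

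The obstacle is establishing the exact equality $\mathfrak{D}^{-1}=\pi^{2-p}R$, in particular the inclusion $\subseteq$ used for non-degeneracy. I would argue this directly. If $z=\pi^{1-p}w$ with $w\in R$ and $\Tr(zR)\subseteq\ZZ$, I want to show that $\pi\mid w$. Modulo $\pi$, the pairing $(a,b)\mapsto \Tr(\pi^{1-p}ab)\bmod p$ reduces, after writing $\pi^{1-p}=u_0p^{-1}$, to $\Tr(u_0ab)/p$. Because $\Tr(R)\subseteq\ZZ$, integrality forces $\Tr(u_0 w R)\subseteq p\ZZ$. Now $\Tr(r)\equiv (p-1)\bar r\equiv -\bar r\pmod p$, where $\bar r\in\FF_p=R/\pi$ is the residue; this holds because every $\zetap^k$ is congruent to $1$ mod $\pi$. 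So the condition becomes $\overline{u_0 w}=0$, that is, $\pi\mid w$. Iterating this valuation argument, or simply citing the classical fact that the different is $(\pi)^{p-2}$, completes the proof.
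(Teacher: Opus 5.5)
Your proof is correct, and its skeleton matches the paper's. Both identify the trace dual $R^\vee=\{z\in K:\Tr(zR)\subseteq\ZZ\}$ with $\frac{\pi}{p}R=\pi^{2-p}R$. After that, well-definedness and non-degeneracy are the same one-line computations: $\frac{\pi^L r\tilde y}{p\pi^{L-1}}\in\frac{\pi}{p}R$, and $\frac{\tilde x}{p\pi^{L-1}}\in\frac{\pi}{p}R\iff\tilde x\in\pi^L R$.

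Where you differ is in how $R^\vee=\frac{\pi}{p}R$ is established. The paper writes down $e_j=(\zeta_p^{p-j}-\zeta_p)/\pi$ and checks two things: these form a $\ZZ$-basis of $R$, and $\Tr(\frac{\pi}{p}\zeta_p^i e_j)=\delta_{ij}$. Exhibiting this dual basis gives both inclusions at once. You instead either cite $\mathfrak D=(\pi)^{p-2}$ or argue by valuations:
\begin{itemize}
\item $\Tr(\pi R)\subseteq p\ZZ$ gives $\frac{\pi}{p}R\subseteq R^\vee$;
\item the congruence $\Tr(r)\equiv-\bar r\pmod p$ shows that $\pi^{1-p}w\in R^\vee$ forces $\pi\mid w$, and you iterate this.
\end{itemize}
Your route avoids guessing a dual basis and explains the answer conceptually; it is the standard computation of the different in a totally ramified extension. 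The paper's route is a closed, fully explicit check using only $\Tr(1)=p-1$ and $\Tr(\zeta_p^k)=-1$, and it also yields a concrete dual basis.

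Two small points to tidy up. First, the displayed containment $\Tr(\frac{1}{p}\pi^{p-2}\cdot\pi R)\subseteq\ZZ$ is garbled: that set is just $\Tr(R)$. Since $\pi^{2-p}R=\frac{\pi}{p}R$, what you need, and what $\Tr(\pi R)\subseteq p\ZZ$ gives, is $\Tr(\frac{1}{p}\pi R)\subseteq\ZZ$. Second, say explicitly how the descent runs in the non-degeneracy step. Start from $z=u_0^{-1}\pi^{-(L+p-2)}\tilde x$, written as $z=\pi^{-m}w$. At each stage with $m\geq p-1$, apply your one-step argument to $\pi^{m-p+1}z=\pi^{1-p}w$, which still pairs integrally with $R$ because $\pi^{m-p+1}R\subseteq R$. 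After $L$ steps you get $\pi^L\mid\tilde x$. This uses only $R^\vee\cap\pi^{-m}R$, so you never need an a priori bound on the denominators of $R^\vee$.
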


\begin{proof}
  Let $(\zetap^i)_{0 \leq i \leq p-2}$ be the canonical basis of $R$. We can show that the family $(e_j)_{0 \leq j \leq p-2}$ of $R$ defined by
  \begin{equation*}
    e_j := \frac{\zetap^{p-j} - \zetap}{\pi} = \zetap \sum_{k=0}^{p-j-2} \zetap^k, \qquad 0 \leq j \leq p-2,
  \end{equation*}
  is also a basis of $R$: $e_0=-1$, and $e_j-e_{j+1}=\zetap^{p-j-1}$ for $0\leq j<p-2$, so these elements generate the basis $1,\zetap,\ldots,\zetap^{p-2}$. For $0 \leq i, j \leq p-2$ we have $\Tr\!\left(\frac{\pi}{p}\zetap^i e_j\right) = \delta_{i,j}$. In other words, $((\pi/p)e_j)_{0 \leq j \leq p-2}$ is the trace-dual basis of $(\zetap^i)_{0 \leq i \leq p-2}$. This allows us to show, for $x, y \in R$, that
  \begin{equation}\label{eq:non-degenerate}
    \Tr\left(\frac{xy}{p\pi^{L-1}}\right) \in \ZZ \text{ for all } y \in R \iff x \in \pi^L R,
  \end{equation}
  because the trace-dual lattice is $R^\vee=(\pi/p)R$, and the left-hand condition is equivalent to $x/(p\pi^{L-1})\in R^\vee$.

  The map $B_L$ is well defined: changing either lift by an element of $\pi^L R$ changes the trace by an element of $\Tr((\pi/p)R)\subseteq\ZZ$. It is symmetric and $\ZZ$-bilinear, and non-degeneracy follows from~\eqref{eq:non-degenerate}.
  \qed
\end{proof}

Thanks to this lemma, one can easily define the dual of $R_L$, by indexing the characters by the elements of $R_L$.

\begin{proposition}[Duality]\label{pr:duality}
  The map below is an isomorphism:
  \begin{equation*}
    \Phi_L : R/\pi^LR \to \operatorname{Hom}\big((R/\pi^LR, +),\, \CC^\times\big), \qquad y \mapsto \chi_y,
  \end{equation*}
  where $\chi_y : x \mapsto \exp\!\left(2i\pi \Tr\left(\frac{xy}{p\pi^{L-1}}\right)\right)$. For $y \in R/\pi^LR$, we call \emph{character} of $(\ring{L}, +)$, the map $\chi_y : \ring{L} \to \CC^\times$.
\end{proposition}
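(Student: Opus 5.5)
The plan is to get everything from the bilinear form $B_L$ of \Cref{le:bilinear_trace}, using that $\chi_y(x)=\exp(2i\pi B_L(x,y))$.

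\textbf{Well-definedness.} Since $B_L$ takes values in $\QQ/\ZZ$ and $t\mapsto e^{2i\pi t}$ is well defined on $\QQ/\ZZ$, $\chi_y$ depends only on the classes of $x$ and $y$ modulo $\pi^L R$. This handles the change-of-lift issue directly.

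\textbf{Homomorphism properties.} $\ZZ$-bilinearity of $B_L$ in the first variable gives $\chi_y(x+x')=\chi_y(x)\chi_y(x')$, so each $\chi_y$ is a character of $(\ring{L},+)$. Bilinearity in the second variable gives $\chi_{y+y'}=\chi_y\chi_{y'}$. Hence $\Phi_L$ is a group homomorphism from $(\ring{L},+)$ to the character group under pointwise multiplication.

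\textbf{Injectivity.} Suppose $\chi_y$ is trivial. Then $B_L(x,y)=0$ in $\QQ/\ZZ$ for all $x\in\ring{L}$, that is, $\Tr(xy/(p\pi^{L-1}))\in\ZZ$ for every $x\in R$. By symmetry this is exactly the left-hand side of \eqref{eq:non-degenerate} with the roles of $x$ and $y$ swapped. So $y\in\pi^L R$, i.e.\ $y=0$ in $\ring{L}$, and $\ker\Phi_L$ is trivial.

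\textbf{Surjectivity.} This step needs care, because $\ring{L}$ need not be cyclic, so we count rather than construct characters. The preliminaries give $|\ring{L}|=p^L$, and a finite abelian group $A$ has exactly $|A|$ characters into $\CC^\times$. The cleanest justification is $\operatorname{Hom}(A,\CC^\times)\cong\operatorname{Hom}(A,\QQ/\ZZ)$ together with the structure theorem: one checks $|\operatorname{Hom}(\ZZ/m,\CC^\times)|=m$ and uses multiplicativity over direct sums. An injective map between finite sets of equal cardinality $p^L$ is bijective, so $\Phi_L$ is an isomorphism.

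I expect no real obstacle. The only points needing attention are applying \eqref{eq:non-degenerate} with the correct variable via symmetry, and citing the standard count of characters of a finite abelian group.
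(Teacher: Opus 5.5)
Your proof is correct and follows the same route as the paper. Both arguments get the homomorphism property from the (bi)linearity of the trace pairing $B_L$, injectivity from the non-degeneracy in \Cref{le:bilinear_trace} (via \eqref{eq:non-degenerate}, which you correctly apply to $y$ using symmetry), and surjectivity by comparing $|\ring{L}|=p^L$ with the number of characters of a finite abelian group. The paper leaves implicit the well-definedness on residue classes and the justification of the character count, both of which you spell out.
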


\begin{proof}
  Since the field trace is linear, $\Phi_L$ is a group homomorphism. It remains to show that $\Phi_L$ is a bijection. Because $B_L$ from \Cref{le:bilinear_trace} is non-degenerate, $\Phi_L$ is injective. Moreover, $(\ring{L}, +)$ is a finite group, so it has the same cardinality as its dual. In other words,
  \begin{equation*}
    |\ring{L}| = |\operatorname{Hom}((\ring{L}, +), \CC^\times)|.
  \end{equation*}
  Hence, $\Phi_L$ is an injection between groups of the same cardinality, \mbox{so it is an isomorphism.}\qed
\end{proof}

Several useful properties of \mbox{characters follow from this definition.}

\begin{proposition}[Properties of the Characters]\label{pr:characters}
  Let $x, y, a, b$ be elements of $\ring{L}$, let $\lambda \in R$, and let $\sigma_u \in \operatorname{Gal}(K/\QQ)$ with $u \in \FF_p^\times$. The characters of $(\ring{L}, +)$ have the properties listed below.
  \begin{enumerate}
    \item \emph{(Group homomorphism)} 
    
    \centerline{$\qquad \chi_y(a+b) = \chi_y(a)\chi_y(b)$;}
    \item \emph{(Symmetry)} 
    
    \centerline{$\qquad \chi_y(\lambda x) = \chi_1(\lambda xy) = \chi_x(\lambda y)$;}
    \item \emph{(Orthogonality of characters)} 
    
    \centerline{$\qquad \dfrac{1}{p^L}\displaystyle\sum_{y \in R/\pi^LR} \chi_y(x) = \delta_{0,x}$;}
    \item \emph{(Stability under the Galois group)}
    
    We have \[\chi_y(x) = \chi_{\sigma_u(y)}\big(\pi^{L-1}\sigma_u(x/\pi^{L-1})\big),\] where $\pi^{L-1}\sigma_u(x/\pi^{L-1}) \equiv u^{1-L}x \pmod{\pi R}$;
    \item \emph{(Characters of $\ring{L-1}$)}
    
    For $L\geq2$, the character $\chi_{\pi y} : \ring{L} \to \CC^\times$ induces a character on $\ring{L-1}$.
  \end{enumerate}
\end{proposition}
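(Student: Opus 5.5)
We prove the five items in order, working throughout with the explicit formula $\chi_y(x)=\exp(2i\pi B_L(x,y))$ from \Cref{pr:duality}. Every manipulation is done on lifts in $R$, and \Cref{le:bilinear_trace} guarantees that the result is independent of the chosen lifts modulo $\pi^L R$.

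\textbf{Items 1 and 2.} Item 1 is immediate from the $\ZZ$-linearity of $\Tr$: $B_L(a+b,y)=B_L(a,y)+B_L(b,y)$ in $\QQ/\ZZ$. For item 2, $\chi_y(\lambda x)$ is $\exp\bigl(2i\pi\Tr(\lambda xy/(p\pi^{L-1}))\bigr)$. This expression depends only on the product $\lambda x y$. Since $R$ is commutative, regrouping the product as $1\cdot(\lambda xy)$ or $x\cdot(\lambda y)$ gives both stated equalities.

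\textbf{Item 3.} By item 2, $y\mapsto\chi_y(x)=\chi_x(y)$ is a character of $(R_L,+)$. If $x=0$ in $R_L$, every term equals $1$, and the average is $1$. If $x\neq0$, non-degeneracy \eqref{eq:non-degenerate} gives some $y_0$ with $\chi_x(y_0)\neq1$. Reindexing the sum $S=\sum_y\chi_x(y)$ by $y\mapsto y+y_0$ gives $S=\chi_x(y_0)S$, hence $S=0$.

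\textbf{Item 4.} This is the main computation and the only delicate step. Galois invariance of the trace gives $\Tr(z)=\Tr(\sigma_u(z))$ for every $z\in K$. We apply this to $z=\dfrac{xy}{p\pi^{L-1}}$. Since $\sigma_u(p)=p$, we obtain
\[\sigma_u(z)=\frac{\sigma_u(y)\,\pi^{L-1}\sigma_u(x/\pi^{L-1})}{p\,\pi^{L-1}},\]
so $\chi_y(x)=\chi_{\sigma_u(y)}\bigl(\pi^{L-1}\sigma_u(x/\pi^{L-1})\bigr)$ formally. For this to be a statement about characters of $R_L$, we must check that $x':=\pi^{L-1}\sigma_u(x)/\sigma_u(\pi)^{L-1}$ lies in $R$ and that its class is well defined.

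Both facts follow from the unit $\epsilon_u:=\sigma_u(\pi)/\pi=(\zetap^u-1)/(\zetap-1)=1+\zetap+\cdots+\zetap^{u-1}\in R$. Its inverse is of the same form with $u$ replaced by $u^{-1}\bmod p$, so $\epsilon_u\in R^\times$. Hence $x'=\epsilon_u^{1-L}\sigma_u(x)\in R$. Moreover, $\sigma_u$ preserves $\pi^LR=\sigma_u(\pi)^LR$, so $x\mapsto x'$ is well defined on $R_L$.

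For the congruence, note that $\zetap\equiv1\pmod{\pi}$. This gives $\epsilon_u\equiv u\pmod{\pi}$, and $\sigma_u(x)\equiv x\pmod{\pi}$ because $\sigma_u$ acts trivially on $R/\pi R=\FF_p$. Therefore $x'\equiv u^{1-L}x\pmod{\pi R}$.

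\textbf{Item 5.} Here $\chi_{\pi y}(x)=\exp\bigl(2i\pi\Tr(xy/(p\pi^{L-2}))\bigr)$, which is precisely the formula defining $\chi_y$ on $R_{L-1}$. It remains to check that $\chi_{\pi y}$ is trivial on $\pi^{L-1}R/\pi^LR$. For $x=\pi^{L-1}w$, we get $\Tr(\pi^{L-1}w\cdot\pi y/(p\pi^{L-1}))=\Tr(\pi wy/p)\in\Tr((\pi/p)R)\subseteq\ZZ$. So $\chi_{\pi y}$ factors through $R_L\to R_{L-1}$ and coincides with the character $\chi_y$ of $R_{L-1}$.
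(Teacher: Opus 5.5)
Your proposal is correct and follows essentially the same route as the paper: direct computation from the trace formula for items 1, 2 and 5, the reindexing trick for orthogonality, and Galois invariance of $\Tr$ combined with $\sigma_u(\pi)/\pi\equiv u \pmod{\pi R}$ for item 4. You also verify points the paper skips, namely that $\pi^{L-1}\sigma_u(x/\pi^{L-1})$ lies in $R$ and is well defined modulo $\pi^L R$, and that $\chi_{\pi y}$ is trivial on $\pi^{L-1}R/\pi^L R$.

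One small fix: the inverse of $\epsilon_u$ is not $\epsilon_{u^{-1}}$ (for $p=3$, $u=2$ one gets $\epsilon_2^2=\zeta_3$). It is $\frac{\zetap-1}{\zetap^u-1}=\sum_{k=0}^{v-1}\zetap^{uk}\in R$ with $uv\equiv 1\pmod p$, which still gives $\epsilon_u\in R^\times$.
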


\begin{proof}
  We prove the above properties one by one.
\begin{itemize}
    \item Properties 1 and 2 follow directly from the definition of $\Phi_L$ in \Cref{pr:duality}.
    \item The orthogonality of characters holds trivially when $x = 0$. Suppose $x$ is not zero. Then there exists $z \in \ring{L}$ such that $\chi_z(x) \neq 1$. \mbox{Reindexing the character sum gives}
  \begin{align*}
    \frac{1}{p^L}\sum_{y \in \ring{L}} \chi_y(x) &= \frac{1}{p^L}\sum_{y \in \ring{L}} \chi_{y+z}(x) \\ &= \chi_z(x)\left(\frac{1}{p^L}\sum_{y \in \ring{L}} \chi_y(x)\right),
  \end{align*}
  which shows that the sum is zero.
    \item The stability under Galois automorphisms follows from the corresponding stability of the field trace. Let $x, y \in \ring{L}$. Since $\Tr \circ \sigma = \Tr$ for every $\sigma \in \operatorname{Gal}(K/\QQ)$, the exponent of the character satisfies
  \begin{align*}
    \Tr\left(\frac{xy}{p\pi^{L-1}}\right) &= \Tr\left(\sigma_u\left(\frac{xy}{p\pi^{L-1}}\right)\right) \\ &= \Tr\left(\frac{1}{p\pi^{L-1}}\Big[\pi^{L-1}\sigma_u\Big(\frac{x}{\pi^{L-1}}\Big)\Big]\sigma_u(y)\right).
  \end{align*}
  Moreover, we note that $\sigma_u(\pi)/\pi\equiv u\pmod{\pi R}$, hence $(\pi/\sigma_u(\pi))^{L-1}\equiv u^{1-L}\pmod{\pi R}$ and $\sigma_u(x) \equiv x \pmod{\pi R}$, which concludes the proof of this property.
    \item  Finally, the definition of $\Phi_L$ in \Cref{pr:duality} gives
  \begin{equation*}
    \chi_{\pi y}(x) = \exp\left(2i\pi \Tr\left(\frac{x\pi y}{p\pi^{L-1}}\right)\right) = \exp\left(2i\pi \Tr\left(\frac{xy}{p\pi^{L-2}}\right)\right),
  \end{equation*}
  which corresponds to the image of a character of $\ring{L-1}$.\qed
\end{itemize}
\end{proof}

In particular, the concept of characters allows us to define the Fourier transform on $\ring{L}$ and thus to correctly \mbox{define its quantum Fourier transform.}

\subsection{Construction of the Quantum Fourier Transform}

For odd $p$, the Fourier transform over Galois rings in~\cite{zhang_quantum_2009} does not directly cover the present ramified ring. Indeed,
\[\Phi_p(X)=(X-1)^{p-1}\pmod p.\]
We therefore use the trace-dual pairing above. Its Fourier transform can be implemented using ordinary abelian Fourier transforms \mbox{and a change of coordinates.}

\begin{definition}[Quantum Fourier Transform over $\ring{L}$]
  Let $p$ be a prime and $L\geq1$. We define the quantum Fourier transform (QFT) over $\ring{L}$ \mbox{as the linear operator specified by}
  \begin{equation*}
    \operatorname{QFT}_{\ring{L}}(\ket{x}) := \frac{1}{\sqrt{p^L}} \sum_{y \in R_L} \chi_y(x)\, \ket{y}.
  \end{equation*}
  for each $x\in \ring{L}$, and extend to all quantum states by linearity. Orthogonality of characters proves \mbox{that this map is unitary.}
\end{definition}

  We extend this quantum Fourier transform to $\ring{L}^n$ by performing the unitary $\operatorname{QFT}_{\ring{L}}$ coordinatewise. In other words, we can consider characters on $\ring{L}^n$ such that for $\x, \y \in \ring{L}^n$ we have,
  \[\chi_\y (\x) = \prod_{i = 1}^{n} \chi_{y_i} (x_i) = \exp\left(2 i \pi \Tr\left(\frac{\inner{\x}{\y}}{p \pi^{L-1}}\right)\right).\]

Note that when $L = 1$, we have $\ring{1} \cong \FF_p$, and $B_1(x,y)=-xy/p\bmod\ZZ$. Thus this QFT is the inverse of the positive-sign quantum Fourier transform \mbox{over $\FF_p$ introduced in Section~2.5.}

\begin{proposition}\label{le:construction_QFT}
  For fixed $p$ and $L=t(p-1)$, the quantum Fourier transform over $\ring{L}$ admits an implementation to error $\varepsilon$ in time $\operatorname{poly}_p(t,\log(1/\varepsilon))$.
\end{proposition}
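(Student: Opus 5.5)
Since $R_L=R/qR$ and $q R\cong(q\ZZ)^{p-1}$ under $\iota$, we will identify $\ring{L}$ with $\ZZ_q^{p-1}$ through the coefficient basis $(1,\zetap,\ldots,\zetap^{p-2})$. The Fourier transform will then be realised as an ordinary abelian QFT over $\ZZ_q^{p-1}$, followed by a classical reversible change of coordinates on the frequency register. Concretely, for $x=\sum_i x_i\zetap^i$ and $y=\sum_j y_j\zetap^j$ we have
\[
\Tr\!\left(\frac{xy}{p\pi^{L-1}}\right)=\Tr\!\left(\frac{u_0^{t}\,\pi}{p\cdot p^{t}}\,xy\right)=\frac1q\,\Tr\!\left(\frac{\pi}{p}\,x\,(u_0^t y)\right),
\]
where we used $p\pi^{L-1}=p\cdot q/(u_0^t\pi)$. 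The plan is to expand $\Tr\bigl((\pi/p)\,x\,w\bigr)$ with $w:=u_0^t y$ written in the basis $(e_j)$ from the proof of \Cref{le:bilinear_trace}. Since $((\pi/p)e_j)$ is trace-dual to $(\zetap^i)$, writing $w=\sum_j w'_j e_j$ gives $\Tr((\pi/p)xw)=\sum_i x_iw'_i$. Hence
\[
\chi_y(x)=\exp\!\left(\tfrac{2i\pi}{q}\textstyle\sum_i x_i\,w'_i\right),
\]
which is the standard $\ZZ_q^{p-1}$ character indexed by $w'=M(y)\bmod q$. Here $M:\ZZ_q^{p-1}\to\ZZ_q^{p-1}$ is the composition of three maps: multiplication by the unit $u_0^t$ on coefficient vectors, the change of basis from $(\zetap^i)$ to $(e_j)$, and reduction mod $q$. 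Both the unit multiplication and the basis change are integral with integral inverses (the second because $(e_j)$ is a $\ZZ$-basis of $R$, the first because $u_0$ is a unit), so $M$ is a bijection of $\ZZ_q^{p-1}$.

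The algorithm will therefore be:
\begin{enumerate}
\item Apply $\operatorname{QFT}_q^{\otimes(p-1)}$ to the coefficient register, producing $\frac{1}{\sqrt{q^{p-1}}}\sum_{w'}e^{2i\pi\langle x,w'\rangle/q}\ket{w'}$. The sign convention is adjusted by using the inverse QFT if necessary.
\item Apply the permutation $\ket{w'}\mapsto\ket{M^{-1}(w')}=\ket{y}$.
\end{enumerate}
Reindexing the sum by $y=M^{-1}(w')$ yields exactly $\operatorname{QFT}_{\ring{L}}\ket{x}$.

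For the first step, the lemma on implementing the QFT gives $p-1$ circuits, each of cost $\operatorname{poly}(\log q,\log(1/\varepsilon))=\operatorname{poly}_p(t,\log(1/\varepsilon))$. Errors add in operator norm, so we run each at precision $\varepsilon/(p-1)$. For the second step, an invertible linear map $\ZZ_q^{p-1}\to\ZZ_q^{p-1}$ together with its inverse can be computed classically in $\operatorname{poly}(p,\log q)$ time. We implement the permutation reversibly by the standard trick: compute $\ket{w'}\ket{0}\mapsto\ket{w'}\ket{M^{-1}w'}$, then uncompute the first register using $M$. This step is exact.

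The main obstacle I expect is pinning down $u_0$ and the matrix $M$ concretely and checking that $M$ is computable with bounded cost. Since $p=\prod_{k=1}^{p-1}(1-\zetap^k)$, we have $u_0=\prod_k(1-\zetap^k)/\pi^{p-1}=(-1)^{p-1}\prod_{k}(1+\zetap+\cdots+\zetap^{k-1})$. This is an explicit element of $R$ whose inverse is also in $R$. The power $u_0^{t}\bmod q$ can be computed by repeated squaring in $R_q$, at cost $\operatorname{poly}(p,\log t)$ ring operations. It is a constant that is precomputed classically, so multiplication by it and by its inverse are fixed integer matrices mod $q$. A secondary point to check is the sign and normalisation conventions, namely that $L=1$ recovers the inverse QFT over $\FF_p$ as remarked above, which fixes whether $\operatorname{QFT}_q$ or its inverse is used in the first step.
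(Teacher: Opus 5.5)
Your proposal is correct and follows essentially the paper's route: identify $R_L=R_q$ with $\ZZ_q^{p-1}$ and realise $\operatorname{QFT}_{R_L}$ as the abelian QFT over $\ZZ_q^{p-1}$ composed with the invertible linear map encoding the trace pairing. The differences are cosmetic. You make the matrix explicit through $u_0^t$ and the trace-dual basis $(e_j)$; your $M$ coincides mod $q$ with the paper's $\mathbf{D}$, $D_{ij}=qB_L(\zetap^i,\zetap^j)$. You also apply $M^{-1}$ on the frequency register after the QFT rather than $\mathcal{U}_{\mathbf{D}}$ before it, which spares you the symmetry $\mathbf{D}^T=\mathbf{D}$. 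Finally, your hedging about using the inverse QFT is unnecessary, since the sign is already absorbed into $M$ and the standard positive-sign $\operatorname{QFT}_q$ is the right choice.
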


\begin{proof}
  We first recall that $\ring{L} = \cycq$ when $L = (p-1)t$, and that $\cycq \cong \ZZ_q^{p-1}$. Hence, if we let $\iota : \ZZ_q^{p-1} \to \cycq$ denote such an isomorphism, the map
  \begin{equation*}
    \overline{B}_L : \ZZ_q^{p-1} \times \ZZ_q^{p-1} \to \QQ/\ZZ, \qquad \overline{B}_L({\mathbf{x}},{\mathbf{y}}) := B_L(\iota({\mathbf{x}}), \iota({\mathbf{y}})),
  \end{equation*}
  is a non-degenerate symmetric bilinear form. Consequently, there exists a matrix ${\mathbf{D}} \in \ZZ_q^{(p-1)\times(p-1)}$ such that
  \begin{equation*}
    \overline{B}_L({\mathbf{x}},{\mathbf{y}}) = \frac{1}{q}({\mathbf{x}}^T {\mathbf{D}} {\mathbf{y}}), \qquad \text{for all } {\mathbf{x}}, {\mathbf{y}} \in \ZZ_q^{p-1}.
  \end{equation*}
  In the power basis we take $D_{ij}=qB_L(\zetap^i,\zetap^j)\bmod q$; these entries are integers modulo $q$ since $q\ring{L}=0$. They are computable by rational arithmetic in $K$ in polynomial time in $t$ for fixed $p$. Non-degeneracy makes ${\mathbf{D}}$ invertible, and symmetry gives ${\mathbf{D}}^T={\mathbf{D}}$. Therefore, the unitary $\mathcal{U}_{\mathbf{D}} : \ket{{\mathbf{x}}} \mapsto \ket{\mathbf D\mathbf x}$ admits an efficient implementation, as does the quantum Fourier transform over $\ZZ_q^{p-1}$. The resulting composition acts as follows:
  \begin{align*}
    \operatorname{QFT}_{\ZZ_q^{p-1}} \circ\, \mathcal{U}_{\mathbf{D}} (\ket{{\mathbf{x}}}) &= \operatorname{QFT}_{\ZZ_q^{p-1}}(\ket{\mathbf D\mathbf x})\\ &= \frac{1}{\sqrt{q^{p-1}}} \sum_{{\mathbf{y}} \in \ZZ_q^{p-1}} \exp\!\left(\frac{2i\pi}{q}{\mathbf{x}}^T {\mathbf{D}} {\mathbf{y}}\right)\ket {\mathbf{y}} \\ &= \operatorname{QFT}_{\ring{L}}(\ket{{\mathbf{x}}}),
  \end{align*}
  which concludes the proof.\qed
\end{proof}

\subsection{Phase-state Problem}

In the spirit of Kuperberg's algorithm, the quantum Fourier transform on the ring $\ring{L}$ allows us to transform a CCP sample into a quantum state in which the information about the secret is encoded in the phase of the state. We therefore define a new problem, the phase-state problem (PSP), which consists in recovering the secret ${\mathbf{s}}$ from such phase-state samples.

\begin{definition}[Phase-States Problem]
  Let $n$ and $L$ be positive integers, and let $p$ be a prime. Fix a secret ${\mathbf{s}} \in \ring{L}^n$. Suppose we are given $\ell$ phase-state samples of the form $({\mathbf{y}}_i, \ket{\psi_{{\mathbf{y}}_i}})$, where
  \begin{equation*}
    \y_i \leftarrow \mathcal{U}\big(\ring{L}^n\big), \qquad \ket{\psi_{\y_i}} = \frac{1}{\sqrt p} \sum_{j \in \FF_p} \chi_{\y_i}(\lambda_j {\mathbf{s}})\, \ket{j}.
  \end{equation*}
  The labels $\y_i$ are independent. The \emph{phase-state problem}, denoted $\PSP_{n,L,p}^\ell$, asks to recover the secret ${\mathbf{s}}$.
\end{definition}

\begin{lemma}\label{le:CCP_to_PSP}
  Let $n$ be a positive integer, let $p$ be a prime, and let $q = p^t$ be a power of $p$. Fix $L := t(p-1)$. Then $\CCP_{n,q,p}$ reduces to $\PSP_{n,L,p}$.
\end{lemma}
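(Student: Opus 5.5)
The plan is to turn each CCP sample into a PSP sample by a Kuperberg-style Fourier step on the second register, then show the resulting label is uniform and independent of everything else.

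\textbf{Fourier step.} Start from a CCP sample
\[
\ket{\phi_k}=\frac{1}{\sqrt p}\sum_{j\in\FF_p}\ket{j}\ket{\x_k+\lambda_j\s}.
\]
Since $R_q\cong \ring{L}$ for $L=t(p-1)$, we may view $\x_k+\lambda_j\s$ as an element of $\ring{L}^n$. Apply $\operatorname{QFT}_{\ring{L}}$ coordinatewise to the second register. By \Cref{le:construction_QFT} this costs $\poly_p(t,\log(1/\varepsilon))$ per coordinate, hence is polynomial overall. The state becomes
\[
\frac{1}{\sqrt{p\,p^{Ln}}}\sum_{\y\in\ring{L}^n}\sum_{j}\chi_\y(\x_k+\lambda_j\s)\ket{j}\ket{\y}.
\]
Property 1 of \Cref{pr:characters} gives $\chi_\y(\x_k+\lambda_j\s)=\chi_\y(\x_k)\chi_\y(\lambda_j\s)$, and the factor $\chi_\y(\x_k)$ does not depend on $j$. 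Measuring the second register therefore yields a label $\y$ and leaves the first register, up to the global phase $\chi_\y(\x_k)$, in
\[
\ket{\psi_\y}=\frac{1}{\sqrt p}\sum_j\chi_\y(\lambda_j\s)\ket{j}.
\]
This is exactly a PSP state with secret $\s$, where $\s\in\ZZ_q^n\subset R_q^n=\ring{L}^n$. Recovering $\s$ in $\ring{L}^n$ therefore recovers the CCP secret.

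\textbf{Distribution of labels.} The probability of observing a fixed $\y$ is
\[
\frac{1}{p\,p^{Ln}}\sum_j|\chi_\y(\x_k+\lambda_j\s)|^2=\frac{1}{p^{Ln}},
\]
because characters have modulus $1$. So the label is exactly uniform on $\ring{L}^n$, for every fixed $\x_k$. Since the CCP samples are independent product states and each is processed separately, the labels $\y_i$ are independent, and each post-measurement state depends only on its own $\y_i$ up to global phase. The resulting collection thus has exactly the PSP distribution, and $\ell$ CCP samples give $\ell$ PSP samples.

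\textbf{Expected obstacle.} The main subtlety is bookkeeping rather than mathematics. First, the identification $R_q\cong\ring{L}$ must be compatible with how the register is encoded, so that the coefficient-basis implementation in \Cref{le:construction_QFT} really computes $\chi_\y$. Second, the approximate QFT introduces an error of $\varepsilon$ per sample. I would handle this by choosing $\varepsilon=1/\poly$ relative to the number of samples used, so the total deviation, by a union (hybrid) bound over $\ell$ samples, remains negligible or bounded. With exact QFT the reduction is exact.
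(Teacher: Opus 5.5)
Your proposal is correct and matches the paper's proof: apply $\operatorname{QFT}_{\cycq}^{\otimes n}$ to the second register, use the homomorphism property to pull $\chi_\y(\x)$ out as a global phase, and measure to obtain a uniform label $\y$ that is independent of $\x$. Your remarks on the approximate-QFT error budget and on the embedding $\ZZ_q^n\subset\ring{L}^n$ are sound additions that the paper leaves implicit.
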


\begin{proof}
  It is enough to show that one can build a phase-state sample from a CCP sample involving the same secret ${\mathbf{s}}$. A phase-state sample $({\mathbf{y}}_i, \ket{\psi_{{\mathbf{y}}_i}})$ with ${\mathbf{y}}_i \leftarrow \mathcal{U}\big(\ring{L}^n\big)$ is obtained from a CCP state $\ket{\phi_{\mathbf{x}}}$ by applying a quantum Fourier transform over $\cycq^n \cong \ring{L}^n$ to the second register. Indeed,
  \begin{align*}
    \big(\operatorname{id}_{\FF_p} \otimes \operatorname{QFT}_{\cycq}^{\otimes n}\big)(\ket{\phi_{\mathbf{x}}})
      &= p^{-(Ln+1)/2}\sum_{j \in \FF_p} \ket{j} \sum_{\y \in \ring{L}^n} \chi_\y(\x + \lambda_j \s)\, \ket{\y} \\
      &= p^{-(Ln+1)/2}\sum_{\y \in \ring{L}^n} \chi_\y(\x) \left(\sum_{j \in \FF_p} \chi_\y(\lambda_j \s)\, \ket{j}\right) \ket{\y} \\
      &= p^{-Ln/2}\sum_{\y \in \ring{L}^n} \chi_\y(\x)\, \ket{\psi_\y}\, \ket{\y}.
  \end{align*}
  Measuring the second register yields each $\y$ with probability $p^{-Ln}$, independently of $\x$. Its remaining factor $\chi_\y(\x)$ is a global phase. Independent input samples \mbox{therefore give independent uniform labels.}
  \qed
\end{proof}

\section{A practical algorithm for solving the CCP}
In 2023, Imran and Ivanyos \cite{Imran:2023tft} showed that there exists a quantum algorithm solving the hidden subgroup problem for nilpotent groups. They further point out that solving the hidden-shift problem via the "Simon-meets-Kuperberg" algorithm \cite{bonnetain_hidden_2018} is a special case of their algorithm \mbox{applied to the dihedral group.}

\begin{theorem}[\cite{Imran:2023tft}, Theorem 1.]\label{th:imran_ivanyos1}
	Let $G$ be a nilpotent group of bounded nilpotency class such that the prime factors of $|G|$ are constant. Assuming there exists a black box allowing a unique encoding of the elements via $\ell$-bit strings, there exists an exact quantum algorithm that solves the hidden-shift problem for $G$ using $\operatorname{poly}(\ell)$ operations and $\operatorname{poly}(\log |G|)$ calls to \mbox{the hidden subgroup problem oracle.}
\end{theorem}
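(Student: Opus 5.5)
The plan is to reduce the hidden-shift problem on $G$ to a sequence of hidden-shift problems on abelian sections of $G$, using a central series. We then solve each abelian layer with a Simon-meets-Kuperberg style zero-sum sieve, whose cost is polynomial because the primes dividing $|G|$ are constant. Concretely, let $f_0,f_1:G\to X$ be the two functions, with $f_1(x)=f_0(xs)$ for the hidden shift $s$. Querying the oracle on a uniform superposition over $\{0,1\}\times G$ and measuring the output register gives coset states $\tfrac{1}{\sqrt2}(\ket{0}\ket{g}+\ket{1}\ket{gs})$. These are exactly coset states of the order-two subgroup hidden in the associated HSP instance. Each such state costs one oracle call, so it suffices to recover $s$ from $\operatorname{poly}(\log|G|)$ of them using $\operatorname{poly}(\ell)$ operations.

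\textbf{Layer structure.} We use the lower central series $G=C^1(G)\supset C^2(G)\supset\cdots\supset C^{c+1}(G)=\{e\}$ (as in \Cref{le:cyclotomic_grp_is_nilpotent}), where the class $c$ is bounded. Each quotient $C^k/C^{k+1}$ is central in $G/C^{k+1}$ and abelian. Refining each quotient by its $p$-adic filtration for the constantly many primes $p$ involved, every step becomes an elementary abelian $p$-group. We determine $s$ top-down, one coset of $C^{k+1}$ at a time.

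\textbf{Solving one layer.} Suppose $s$ is already known modulo $C^k$, say $s=s_k u$ with $s_k$ known and $u\in C^k$. Multiplying the second branch by $s_k^{-1}$ turns the remaining problem into a hidden shift by $u$. Modulo $C^{k+1}$, $u$ is central, so the projected states behave like hidden-shift states over the abelian group $C^k/C^{k+1}$. We apply an abelian Fourier transform on this quotient. This produces phase states $\tfrac{1}{\sqrt2}(\ket{0}+\chi_y(u)\ket{1})$ with uniform known labels $y$, analogous to \Cref{le:CCP_to_PSP}. We then run the zero-sum sieve on these phase states. Since every prime $p$ is constant, combining $O(p)$ states whose labels sum to zero on a block of coordinates yields a state revealing a linear equation over $\mathbb{F}_p$ satisfied by $u$. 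Gaussian elimination then recovers $u \bmod C^{k+1}$ after polynomially many combinations.

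\textbf{Totals and main obstacle.} The length of each layer is at most $\log|G|$, and there are boundedly many layers. Hence the total number of samples, and hence oracle calls, is $\operatorname{poly}(\log|G|)$. All group operations are done in the black-box encoding, costing $\operatorname{poly}(\ell)$. I expect two difficulties. The first is exactness: making every sieve step succeed with certainty rather than with high probability. Here I would use exact amplitude amplification, which is available because the oracle algorithm can run the preparation unitary and its inverse. The second is keeping the per-layer overhead polynomial rather than exponential in the class. This is exactly where the bounded-class and constant-prime hypotheses are consumed.
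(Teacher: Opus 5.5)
The paper does not prove this theorem; it quotes it from Imran--Ivanyos. Its Section~4 re-derives the mechanism for $G_{n,q,p}$, and measured against that mechanism your layer step has a genuine gap.

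\textbf{The quotient projection.} The sentence ``modulo $C^{k+1}$, $u$ is central, so the projected states behave like hidden-shift states over $C^k/C^{k+1}$'' is not an operation you can perform on coset states. Measuring the coset $gC^k$ is harmless, because it is the same in both branches. Reducing the group register modulo $C^{k+1}$ and discarding the rest is not harmless. The branches $g$ and $gu^{-1}$ differ in their $C^{k+1}$-coordinates by an amount depending on the unknown deeper part of $u$ and on $g$ (carries in your $p$-adic refinement, commutators otherwise). Tracing those coordinates out therefore destroys the interference. The only way to get clean phase states is to Fourier transform the finer coordinates as well, and then sieve until the labels are trivial on $C^{k+1}$. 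In the paper this is the descent $\PSP_{n,L,p}\to\cdots\to\PSP_{n,1,p}$ via \Cref{le:sieving_step}, which must come before reading out $\s\bmod\pi$. Each level consumes a bucket of $S(n,p)$ samples, so $|\calL_0|\geq S(n,p)^{L}\poly(n)$. The cost is exponential in the length of the series, and this is where the class hypothesis is used. Your count (boundedly many layers, each polynomial) has no such factor. Once every refined step costs a multiplicative factor, your $p$-adic refinement is also fatal, because the hypotheses do not bound the exponent of $G$. A sanity check: under your reading, $G=\ZZ_{2^t}$ is abelian of class $1$ with the single prime $2$. Your argument would then recover the shift one bit per refined step, i.e.\ solve the dihedral hidden subgroup problem over $\ZZ_{2^t}$ in polynomial time. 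That is far beyond Kuperberg's subexponential algorithm. The paper instead applies the theorem to the HSP in the nilpotent group $G_{n,q,p}$ itself, with the order-$p$ subgroup $H_{\s}$, where the class governs the number of sieving levels.

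\textbf{The within-layer sieve.} The step ``combine $O(p)$ states whose labels sum to zero on a block, then Gaussian elimination'' is also unsupported. Order-two coset states give binary phase states $\frac{1}{\sqrt2}(\ket{0}+\chi_y(u)\ket{1})$. For odd $p$ these cannot be read out exactly, since the $p$ candidates are pairwise non-orthogonal. Merging them as in Kuperberg's sieve yields $y_1\pm y_2$, with the sign dictated by a measurement outcome, so labels do not simply add. The paper uses $p$-dimensional phase states instead, obtained from $H_{\s}$ or from $p-1$ DCP samples via the staircase projection of \Cref{pro:DCP_and_CCP}. For these, the merge of \Cref{le:sieving_step} outputs $\sum_i\zetap^{x_i}\y_i\equiv\sum_i\y_i\pmod{\pi}$ whatever the measured $x_i$ are; that is why zero-sum subsequences are the right notion. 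Since samples cannot be cloned, you need subsequences with $0/1$ coefficients, not $\FF_p$-linear dependencies, and Gaussian elimination gives these only for $p=2$. Clearing an entire layer $\FF_p^n$ in one level with polynomially many samples requires the zero-sum-subsequence algorithm of \Cref{th:imran_ivanos}, with buckets of size $S(n,p)=p^{O(p\log^2p)}n^{O(p\log p)}$. Clearing it block by block multiplies the sample cost per block and becomes superpolynomial. This zero-sum theorem is the key ingredient, and your proposal never identifies it.
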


Since, by Lemma~\ref{le:cyclotomic_grp_is_nilpotent}, the cyclotomic group $G_{n, q, p}$ is nilpotent, its nilpotency class is bounded by $\log(q) (p-1)$, and its cardinality is a power of $p$, Theorem~\ref{th:imran_ivanyos1} highlights an algorithm for solving the HSP for the cyclotomic group, i.e., a \mbox{potential algorithm for solving CCP.}

Nevertheless, some points need to be qualified. For applications in public key cryptography, we are interested in asymptotic results with respect to the dimension $n$ and the modulus $q$ of the problem. In this setting, the nilpotency class of the cyclotomic group is no longer constant. Moreover, the complexity studied is not a complexity tied to the number of oracle calls, but rather a time and space complexity. Indeed, the exact algorithm of Imran and Ivanyos assumes access to a unitary $U$ implementing state preparation and to its inverse $U^{-1}$. In our model, the input consists only of independent quantum samples: neither the preparation circuit nor its inverse is available. For these reasons, we have decided to present the algorithm of Imran and Ivanyos \cite{Imran:2023tft} \mbox{within the framework of CCP.}

It is worth mentioning that this connection was not immediately apparent to us. During our work on CCP, we independently arrived at an approach closely related to the algorithm of Imran and Ivanyos, before becoming aware of their work. The algorithm we had been developing can be viewed as an instance of their more general framework. We therefore present it here not as a new algorithm, but as a reformulation and specialization of their result to the cyclotomic setting, with particular emphasis on its concrete time and space complexity.\\

In this section, we present an algorithm for solving $\CCP$ in quasi-polynomial time using a quasi-polynomial number of samples. Combined with Proposition \ref{pro:DCP_and_CCP}, this yields a quasi-polynomial-time algorithm for DCP with arbitrary prime-power moduli. Our result generalizes the result of \cite{bai_quasi-polynomial_2025}.

\begin{figure}[H]
	\centering
	\begin{tikzpicture}[
		node distance=2 and 3.5, on grid,
		algo/.style={draw,},
		refe/.style={draw, align=center, densely dotted},
		label/.style={midway}
		]
		
		\node[algo] (CCP) {$\CCP_{n, q, p}$};
		\node[algo, right =of CCP] (PSP1) {$\PSP_{n, L, p}$};
		\node[algo, right =of PSP1] (PSP2) {$\PSP_{n, 1, p}$};
		\node[refe, right =of PSP2] (solver) {Solver\\ Section 4.3};
		
		\draw[->] (CCP) -- node[label, above] {Prop \ref{pro:DCP_and_CCP}} (PSP1);
		\draw[->] (PSP1) -- node[label, above] {Lemma \ref{le:sieving_step}} (PSP2);
		\draw[<-, densely dotted] (PSP2) -- node[label, above] {} (solver);
	\end{tikzpicture} 
	\caption{Overview of the CCP solver. The core idea is to reduce CCP to a phase-state problem and then perform sieving steps iteratively in order to obtain phase states from which information about the secret can be extracted.}
\end{figure}

\subsection{Solver for the easy instance of $\PSP_{n,1,p}$}

Suppose we have a phase-state sample $(\y, \ket{\psi_\y})$ for an instance of $\PSP_{n,1,p}$ such that $\y \neq 0$. Since $\ring{1} = \FF_p$, we have $j \in R_1$, and the quantum state has the form
\begin{align*}
	\frac{1}{\sqrt{p}} \sum_{j \in \FF_p} \chi_\y(\lambda_j \s) \ket{j}
	&= \frac{1}{\sqrt{p}} \sum_{j \in \FF_p} \chi_\s(\lambda_j \y) \ket{j} \\
	&= \frac{1}{\sqrt{p}} \sum_{j \in \ring{1}} \chi_j\big(\inner{\y}{\s}\big) \ket{j} \\
	&= \operatorname{QFT}_{\ring{1}}\big(\ket{\inner{\y}{\s} \bmod p}\big).
\end{align*}
By measuring the phase state in its Fourier basis, we can recover $\inner{{\mathbf{y}}}{{\mathbf{s}}}$. Hence, if we have access to $n$ phase-state samples $({\mathbf{y}}_i, \ket{\psi_{{\mathbf{y}}_i}})$, with constant probability the ${\mathbf{y}}_i$ are linearly independent, we can then recover ${\mathbf{s}} \bmod p$. 

Hence, the goal is to reduce to this simple case using Lemma \ref{le:sieving_step}.

\subsection{Sieving step}

\begin{lemma}[from $\PSP_{n,  L, p}$ to $\PSP_{n, L-1, p}$]\label{le:sieving_step}
	Let $n$ and $L$ be positive integers, with $L\geq 2$. There exists a quantum algorithm which, given $r$ samples from $\PSP_{n, L, p}$ denoted by ${(\y_i, \ket{\psi_{\y_i}})}_{i = 1}^{r}$ with $\y_i$ in $R_L^n$ satisfying \(\sum_{i = 1}^r \y_i  = 0 \mod \pi R,\) is able to produce a sample from $\PSP_{n, L-1, p}$.
\end{lemma}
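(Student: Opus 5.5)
The plan is to combine the $r$ phase states into one by post-selecting on all the $\FF_p$-indices being equal up to known shifts. This is the cyclotomic analogue of the Kuperberg / ``Simon-meets-Kuperberg'' combination step. We tensor the $r$ samples and obtain
\[
p^{-r/2}\sum_{j_1,\dots,j_r\in\FF_p}\ \prod_{i=1}^r \chi_{\y_i}(\lambda_{j_i}\s)\,\ket{j_1,\dots,j_r}.
\]
We compute the differences $d_i:=j_i-j_1\in\FF_p$ for $i=2,\dots,r$ into ancilla registers and measure them. Every outcome $(d_2,\dots,d_r)$ occurs with probability $p^{-(r-1)}$, since all branches have modulus $p^{-r/2}$. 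Setting $d_1=0$ and $j=j_1$, the state collapses to
\[
p^{-1/2}\sum_{j\in\FF_p}\ \prod_{i=1}^r \chi_{\y_i}(\lambda_{j+d_i}\s)\,\ket{j,j+d_2,\dots,j+d_r}.
\]
Because the $d_i$ are known, we then uncompute registers $2,\dots,r$ by subtracting $j$ plus the constant $d_i$. This leaves a single register $\ket{j}$.

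Next we rewrite the phase. We use the cocycle identity $\lambda_{j+d}=\lambda_d+\zetap^{d}\lambda_j$, noted after the definition of CCP. We also use the homomorphism and symmetry properties of \Cref{pr:characters}, namely $\chi_{\y}(\zetap^d\lambda_j\s)=\chi_{\zetap^d\y}(\lambda_j\s)$. Together these give
\[
\prod_{i}\chi_{\y_i}(\lambda_{j+d_i}\s)=\underbrace{\prod_i\chi_{\y_i}(\lambda_{d_i}\s)}_{\text{independent of }j}\cdot\ \chi_{\y^\ast}(\lambda_j\s),\qquad \y^\ast:=\sum_{i=1}^r\zetap^{d_i}\y_i .
\]
The first factor is a global phase and can be discarded. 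Since $\zetap\equiv1\pmod{\pi}$, we get $\y^\ast\equiv\sum_i\y_i\equiv0\pmod{\pi}$ by hypothesis. Hence $\y^\ast=\pi\y'$, and $\y'$ is well defined modulo $\pi^{L-1}$ because the equation $\pi\y'\equiv\y^\ast\pmod{\pi^L}$ determines $\y'$ modulo $\pi^{L-1}$. Moreover $\y'$ is classically computable from the known $\y_i$ and $d_i$. By property 5 of \Cref{pr:characters}, $\chi_{\pi\y'}$ on $\ring{L}^n$ equals the character $\chi_{\y'}$ of $\ring{L-1}^n$ evaluated at $\lambda_j\s\bmod\pi^{L-1}$. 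The resulting state is therefore exactly
\[
\frac{1}{\sqrt p}\sum_{j}\chi_{\y'}(\lambda_j\s)\ket{j},
\]
which is a $\PSP_{n,L-1,p}$ state for the secret $\s\bmod\pi^{L-1}$, with label $\y'$.

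The step I expect to be most delicate is the distribution of the new label $\y'$. The definition of $\PSP_{n,L-1,p}$ requires a uniform label, and we want independence across produced samples. I would argue as follows. Condition on $\y_1,\dots,\y_{r-1}$ and on the measured $d_i$; the $d_i$ are independent of the labels, because their outcome probabilities do not depend on the labels. Then $\y_r$ is uniform on the coset $-\sum_{i<r}\y_i+\pi\ring{L}^n$, assuming the input tuples are produced by selecting labels only according to their residues mod $\pi$. The map $\y_r\mapsto\y'$ is an affine bijection from this coset onto $\ring{L-1}^n$: it is multiplication by the unit $\zetap^{d_r}$ followed by division by $\pi$. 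So $\y'$ is uniform. Independence across different output samples follows in the same way, as long as the disjoint groups are formed using only the mod-$\pi$ residues of the labels.
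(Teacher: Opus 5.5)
Your proposal is correct and follows essentially the paper's proof: tensor the samples, measure the differences $j_i-j_1$, use $\lambda_{j+d}=\lambda_d+\zetap^{d}\lambda_j$ to split off a $j$-independent global phase, and use $\sum_i\zetap^{d_i}\y_i\equiv\sum_i\y_i\equiv 0 \pmod{\pi}$ to write the new label as $\pi\y'$. The differences are cosmetic: you compute the differences into ancillas and then uncompute, where the paper relabels in place. Your uniformity argument for $\y'$ is the one the paper gives separately in Lemma~\ref{le:fusion-uniform}.
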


\begin{proof}
	Suppose that we have access to $r$ samples from $\PSP_{n, L, p}$, which we denote by ${(\y_i, \ket{\psi_{\y_i}})}_{i = 1}^{r}$. The following operations reduce the level by one.
	
	\begin{enumerate}
		\item We begin with the tensor product of the input states:
		\[ \bigotimes_{i = 1}^r \ket{\psi_{\y_i}} = \frac{1}{\sqrt{p^r}} \sum_{(j_1, \dots, j_r) \in \FF_p^{r}} \chi_\s \left(\sum_{i = 1}^{r} \y_i \lambda_{j_i}\right) \ket{j_1}\dots \ket{j_r}.\]
		\item For every $i\geq 2$, we perform the label change $j_i \mapsto j_i - j_1$ in order to introduce a dependence on $j_1$ in each register. This gives explicitly the following state:
		\[\frac{1}{\sqrt{p^r}} \sum_{(j_1, \dots, j_r) \in \FF_p^{r}} \chi_\s \left(\sum_{i = 1}^{r} \y_i \lambda_{j_i}\right) \ket{j_1} \ket{j_2 - j_1}\dots \ket{j_r-j_1}\]
		\item We then measure the last $r-1$ registers. Suppose that the output of the $i$-th register is $x_i \in \FF_p$. Fix $x_1 = 0$ and denote the free variable $j_1$ by $j$. In this case, the remaining state is:
		\[\frac{1}{\sqrt{p}} \sum_{j \in \FF_p} \chi_\s \left(\sum_{i  =1}^{r} \y_i \lambda_{x_i + j}\right)\ket{j}.\]
		Using the property $\lambda_{x_i + j} = \lambda_{x_i} + \zetap^{x_i}\lambda_j$, it follows that the remaining state is proportional to the state
		\[\frac{1}{\sqrt{p}} \sum_{j \in \FF_p} \chi_\s \left(\sum_{i  =1}^{r} \y_i \zetap^{x_i} \lambda_{j}\right)\ket{j}.\]
		We claim that the resulting state indeed corresponds to a $\PSP_{n, L-1, p}$.
	\end{enumerate}
	To this end, it suffices to observe that:
	\[\sum_{i = 1}^r \y_i \zetap^{x_i} = \sum_{i = 1}^{r} \y_i \mod \pi R^n = 0 \mod \pi R^n.\]
	Therefore, there exists $\y' \in \ring{L-1}^n$ such that $\sum_{i = 1}^r \y_i \zetap^{x_i} = \pi \y'$. The resulting state is thus of the form $(\y', \ket{\psi_{{\mathbf{y}}'}})$, a sample from $\PSP_{n, L-1, p}$.\qed
\end{proof}

In order to properly use the above lemma, it is necessary to efficiently find such a sequence $\y_i$ of vectors whose sum is zero modulo $\pi R^n$. The following theorem provides the required zero-sum search algorithm:

\begin{theorem}[\cite{Imran:2023tft}, Theorem 2]\label{th:imran_ivanos}
	There exists a deterministic algorithm which, given a sequence of size $S(n, p) = p^{O(p \log^2 p)} n^{O(p \log p)}$ of vectors in $\FF_p^n$, is able to find a subsequence whose sum is zero in time $\operatorname{poly}(S(n, p))$.
\end{theorem}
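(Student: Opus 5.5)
We propose to prove the zero-sum statement by combining the Chevalley--Warning/Olson-type bound for $\FF_p^n$ with a recursive block decomposition in the spirit of Kuperberg's sieve, and then to make the resulting argument algorithmic.

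\emph{Step 1: the basic lemma.} The first ingredient is the fact that the Davenport constant of $\FF_p^m$ is $m(p-1)+1$. Concretely, any sequence of $m(p-1)+1$ vectors of $\FF_p^m$ has a nonempty zero-sum subsequence. This follows from Chevalley--Warning applied to the system
\[\sum_i x_i^{p-1} v_{i,k} = 0, \qquad k = 1,\dots,m.\]
This existence proof, however, is not constructive. We will therefore use it only for constant dimension $m = O(\log p)$ or $m = O(p)$. In that regime a brute-force search over subsets of a window of $m(p-1)+1$ vectors costs $2^{O(mp)} = p^{O(p)}$ or similar, which is constant for fixed $p$ and within the budget $p^{O(p\log^2 p)}$ in general.

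\emph{Step 2: recursive halving of coordinates.} Split the $n$ coordinates into two halves and treat a sequence in $\FF_p^n$ as a pair of sequences over $\FF_p^{n/2}$. Suppose an algorithm $\mathcal{A}_{m}$ finds zero-sum subsequences in $\FF_p^{m}$ from $S(m)$ vectors. Run $\mathcal{A}_{n/2}$ on the first-half coordinates repeatedly, each time on fresh vectors. This produces disjoint subsequences $T_1,\dots,T_K$ whose sums vanish on the first half.

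Each $T_k$ is then replaced by its sum, a vector $w_k$ supported on the second half. Running $\mathcal{A}_{n/2}$ on $w_1,\dots,w_{S(n/2)}$ yields a set $J$ with $\sum_{k\in J} w_k = 0$. Hence the union $\bigcup_{k\in J} T_k$ is a full zero-sum subsequence. Nonemptiness is inherited because each $T_k$ is nonempty and $J$ is nonempty.

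This gives the recurrence
\[S(n) \le S(n/2)^2 + S(n/2),\]
which alone yields only a quasi-polynomial bound $S(n)=S(1)^{O(n)}$. To reach the polynomial $n^{O(p\log p)}$ we must be more careful.

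\emph{Step 3: a better recursion.} Instead of halving, we would peel off a constant-size block of $b=O(\log p)$ coordinates at each stage and group vectors multiplicatively. Roughly, zero out $b$ coordinates using $c=b(p-1)+1$ vectors via Step 1. The dimension drops from $n$ to $n-b$ at a multiplicative cost $c$ in the number of vectors, giving $S(n)\approx c^{n/b}$, which is exponential in $n$.

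The polynomial dependence must therefore come from exploiting the Olson-type bound more cleverly, via the Imran--Ivanyos idea of working with subsequence sums that land in the zero set of a low-degree polynomial. This is the step I expect to be the main obstacle. Achieving $n^{O(p\log p)}$ rather than $2^{O(n)}$ requires an algorithmic version of Chevalley--Warning, for instance reducing to finding a zero of a polynomial of degree $O(p\log p)$ in $n$ variables by linearization over monomials. There are $n^{O(p\log p)}$ such monomials, and a zero-sum in the linearized space of that dimension can be found by Gaussian elimination. That would give both the sequence length and the running time $\operatorname{poly}(S(n,p))$.

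I would attempt this linearization first: encode the condition "$\sum x_i v_i=0$ with $x_i\in\{0,1\}$" via the polynomial identity $1-\prod_k (1-(\sum_i x_iv_{i,k})^{p-1})$. I would then find a nontrivial $\{0,1\}$-solution by linear algebra on the monomial expansion. Finally, I would verify that the solution produced is a genuine $0/1$ vector, which is the delicate point.
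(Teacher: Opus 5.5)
\textbf{There is a genuine gap.} The paper does not prove this statement; it imports it from \cite{Imran:2023tft} and uses it as a black box. Your text therefore has to stand on its own as a proof, and it does not.

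Everything you actually establish in Steps 1--3 gives sample sizes exponential in $n$. Note also that $S(n)\le S(n/2)^2+S(n/2)$ yields $S(n)=S(1)^{O(n)}=2^{O(n)}$, which is exponential, not quasi-polynomial as you say. The only part aimed at the bound $n^{O(p\log p)}$ is a plan, and it breaks at exactly the point you call delicate:
\begin{itemize}
\item Gaussian elimination over $\FF_p$, on the original vectors or on any linearization or lift of them, returns a dependency $\sum_i c_i v_i=0$ with arbitrary $c_i\in\FF_p$. For $p=2$ this is already a zero-sum, which is why the binary case is easy. For odd $p$, turning it into a $\{0,1\}$-dependency is the entire content of the theorem, and you give no mechanism for it.
\item Your encoding does not produce the advertised count. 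The polynomial $1-\prod_k\bigl(1-(\sum_i x_i v_{i,k})^{p-1}\bigr)$ has degree $n(p-1)$ in the $N$ unknowns $x_i$. Its monomial expansion therefore involves exponentially many (in $n$) monomials, not $n^{O(p\log p)}$.
\item Linearization treats monomials as independent unknowns. A solution of the linearized system need not be of the form $(x^\alpha)_\alpha$ for any $x$, let alone for $x\in\{0,1\}^N$.
\item Chevalley--Warning and Olson are purely existential. The statement becomes tractable only because it allows polynomially many more vectors than the Davenport constant $n(p-1)+1$, and a proof must exploit that surplus algorithmically.
\end{itemize}

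To see what kind of idea is missing, take $p=3$. Gaussian elimination on $n+1$ vectors gives a nonzero $c$ with $\sum_i c_i v_i=0$. Put $A=\{i: c_i=1\}$ and $B=\{i: c_i=-1\}$. Either one of $A,B$ is empty, and then the other is already a zero-sum, or $\sum_{A}v_i=\sum_{B}v_i=:u$, so the block $A\cup B$ realizes $0$, $u$ and $2u$ as sub-sums. Do this on $n+1$ disjoint windows, then run Gaussian elimination on the block sums $u_1,\dots,u_{n+1}$ to get $\sum_j d_j u_j=0$ with $d_j\in\FF_3$ not all zero. Taking $\emptyset$, $A_j$ or $A_j\cup B_j$ according to $d_j$ gives a nonempty zero-sum from $(n+1)^2$ vectors.

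For general $p$ you need a comparable device that controls the coefficients at a cost polynomial in $n$, for instance blocks whose sub-sums contain every multiple of the block sum. Building such blocks efficiently is itself nontrivial. Achieving this with the bound $p^{O(p\log^2 p)}n^{O(p\log p)}$ is precisely what \cite{Imran:2023tft} provides, and in this paper the statement is simply cited rather than re-derived.
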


Observing that $\ring{1}^n = \FF_p^n$ , it suffices to apply the theorem to a sequence of size $S(n, p)$ of $\overline{\y}_i$ corresponding to the reduction modulo $\pi$ of the labels of the $\PSP_{n, L, p}$ samples of the form $(\y_i, \ket{\psi_{\y_i}})$.

\begin{lemma}[Uniform output labels]\label{le:fusion-uniform}
	Let $S=S(n,p)\geq2$ be an integer threshold for Theorem~\ref{th:imran_ivanos}. Given $S$ independent uniform labels in $\ring{L}^n$, select the nonempty zero-sum subsequence using only their residues modulo $\pi$, and apply Lemma~\ref{le:sieving_step}. The output label is uniform in $\ring{L-1}^n$. 
\end{lemma}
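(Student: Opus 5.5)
The plan is to split each label into a part that the selection procedure sees and a part it does not, and to show that the hidden part alone already makes the output label uniform. Fix once and for all a set of representatives $\rho:\FF_p^n\to\ring{L}^n$ for the residues modulo $\pi$. Each input label then decomposes uniquely as $\y_i=\rho(\overline{\y}_i)+\pi\mathbf z_i$ with $\mathbf z_i\in\ring{L-1}^n$. This uses that multiplication by $\pi$ induces a group isomorphism $\ring{L-1}^n\cong\pi\ring{L}^n$, which holds because the kernel of multiplication by $\pi$ on $\ring{L}$ is exactly $\pi^{L-1}\ring{L}$.

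Since $\y_i$ is uniform on $\ring{L}^n$, the pair $(\overline{\y}_i,\mathbf z_i)$ is uniform on $\FF_p^n\times\ring{L-1}^n$. Hence the residue $\overline{\y}_i$ and the "high part" $\mathbf z_i$ are independent and each uniform. Across different $i$ all of these are mutually independent.

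\textbf{Step 1 (the selection sees only residues).} The zero-sum subsequence $T$ from Theorem~\ref{th:imran_ivanos} is a deterministic function of $(\overline{\y}_1,\dots,\overline{\y}_S)$ only. Therefore, conditioned on $T$ and on all residues, the variables $(\mathbf z_i)_{i\in T}$ are still independent and uniform on $\ring{L-1}^n$.

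\textbf{Step 2 (the measurement outcomes carry no information).} In the proof of Lemma~\ref{le:sieving_step}, every computational-basis amplitude of the tensor-product state has modulus $p^{-|T|/2}$, and the relabelling $j_i\mapsto j_i-j_1$ is a permutation of basis states. Consequently the outcomes $(x_i)_{i\in T,\,i\neq i_1}$ of the measured registers are uniform on $\FF_p^{|T|-1}$, independently of the labels. Here $i_1$ denotes the first index of $T$, and we set $x_{i_1}=0$. So the outcomes $x_i$ are also independent of the $\mathbf z_i$.

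\textbf{Step 3 (writing the output label).} By Lemma~\ref{le:sieving_step}, the output label is the unique $\y'\in\ring{L-1}^n$ with $\pi\y'=\sum_{i\in T}\zetap^{x_i}\y_i$. Substituting the decomposition gives
\[
\y'=\mathbf w+\sum_{i\in T}\zetap^{x_i}\mathbf z_i ,
\]
where $\mathbf w$ is defined by $\pi\mathbf w=\sum_{i\in T}\zetap^{x_i}\rho(\overline{\y}_i)$. This element lies in $\pi\ring{L}^n$ because $\zetap\equiv1\pmod\pi$ and $\sum_{i\in T}\overline{\y}_i=0$. The term $\mathbf w$ depends only on the residues and on the outcomes $x_i$.

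\textbf{Step 4 (uniformity).} Since $T$ is nonempty, it contains the index $i_1$, and $\zetap^{x_{i_1}}=1$ is a unit. Condition on all residues, on all outcomes $x_i$, and on $\mathbf z_i$ for $i\neq i_1$. Then $\y'=\mathbf c+\mathbf z_{i_1}$ for a fixed $\mathbf c$, with $\mathbf z_{i_1}$ uniform on $\ring{L-1}^n$. A translate of a uniform variable is uniform, so $\y'$ is uniform conditionally on everything else, hence uniform unconditionally.

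The step needing the most care is Step 2. It requires checking that the post-selection on measurement outcomes in Lemma~\ref{le:sieving_step} does not bias the labels, that is, that the outcome distribution is label-independent. This follows from the flat modulus of the amplitudes. If one prefers to avoid this point, an alternative is to condition on the outcomes as well, since the argument in Step 4 works for every fixed value of the $x_i$.
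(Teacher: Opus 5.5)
Your proposal is correct and follows essentially the same route as the paper. Both arguments fix lifts of the residues and write $\y_i=\mathbf r_i+\pi\mathbf z_i$ with independent uniform $\mathbf z_i\in\ring{L-1}^n$, observe that the selected set and the measured $x_i$ carry no information about the $\mathbf z_i$, and conclude from $\y'=\mathbf c+\sum_{i}\zetap^{x_i}\mathbf z_i$ by conditioning on all but one $\mathbf z_i$; your flat-amplitude justification of the outcome independence and your use of the index with $x_{i_1}=0$ (instead of the paper's general ``$\zetap^{x_i}$ is a unit'' step) only make explicit points the paper leaves brief.
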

\begin{proof}
	Condition on the residues of all labels, choose fixed lifts ${\mathbf{r}}_i$, and write $\y_i={\mathbf{r}}_i+\pi {\mathbf{z}}_i$. The vectors ${\mathbf{z}}_i$ are independent and uniform in $\ring{L-1}^n$. The selected index set $I$ is now fixed. The measured differences $x_i$ are independent of the full labels by the preceding proof, so conditioning on them does not change the distribution of the ${\mathbf{z}}_i$. The output label is
	\[
	\y'={\mathbf{c}}+\sum_{i\in I}\zetap^{x_i}{\mathbf{z}}_i,
	\qquad {\mathbf{c}}=\frac{\sum_{i\in I}\zetap^{x_i}{\mathbf{r}}_i}{\pi}\in\ring{L-1}^n.
	\]
	The division is well defined because the numerator is divisible by $\pi$. Since $I$ is nonempty and every $\zetap^{x_i}$ is a unit, conditioning also on all but one of the ${\mathbf{z}}_i$ leaves $\y'$ uniform. \qed
\end{proof}

\subsection{Description of the sieving algorithm}

The sieving algorithm introduces lists $\calL_0, \dots, \calL_{L-1}$ such that, for every $i$, $\calL_i$ contains samples associated with $\PSP_{n, L-i, p}$. The main idea is to move from one level to the next by grouping in buckets of size $S(n, p)$, then applying the algorithm from \Cref{th:imran_ivanos} to find a zero-sum sequence, and doing the procedure described in Lemma~\ref{le:sieving_step}. Once the final stage is reached, we obtain a list $\calL_{L-1}$ consisting of samples from $\PSP_{n, 1, p}$, from which we can easily recover ${\mathbf{s}}_0 = {\mathbf{s}} \pmod{p}$. Once the secret modulo $p$ has been recovered, we can reduce $\CCP_{n, q, p}$ to an instance of $\CCP_{n, q/p, p}$ with secret ${({\mathbf{s}} - {\mathbf{s}}_0)}/{p}$ and repeat the same procedure until ${\mathbf{s}}$ is fully recovered.

The \mbox{algorithm is presented as follows.}

\begin{algorithm}
	\caption{CCP solver}\label{al:CCP_solver}
	\begin{algorithmic}[1]
		\Require $\ell$ CCP samples of a secret ${\mathbf{s}} \in \ZZ_q^n$
		\Ensure The secret ${\mathbf{s}} \in \ZZ_q^n$ of the CCP instance
		\State Build $\ell$ phase-state samples from the $\ell$ CCP samples, involving the same secret ${\mathbf{s}} \in \ZZ_q^n$.
		\State Regroup the phase-state samples into families of $S(n, p)$ elements and apply the algorithm from \Cref{th:imran_ivanos} to \mbox{each to get zero-sum sequences.}
		
		\State Apply \Cref{le:sieving_step} to obtain, on average, $\ell' \geq \ell \cdot \frac{1}{S(n, p)}$ phase-state samples associated with $\PSP_{n,L-1,p}$ with secret ${\mathbf{s}} \bmod \pi^{L-1} R$.
		\State Repeat Step 2 until obtaining, on average, $\ell \cdot (\frac{1}{S(n, p)})^{L-1}$ samples of $\PSP_{n,1,p}$.
		\State Deduce ${\mathbf{s}} \bmod p$ from this easy instance of $\PSP_{n,1,p}$.
		\State Repeat the previous steps for an instance of $\CCP_{1, q/p, p}$ with secret ${\mathbf{s}}_0 = {\mathbf{s}} \bmod (q/p)$, until the entire secret ${\mathbf{s}} \in \ZZ_q^n$ is recovered.
	\end{algorithmic}
\end{algorithm}

The analysis of Algorithm \ref{al:CCP_solver} shows that it runs in quasi-polynomial time when $q=\poly(n)$. This can be \mbox{summarized by the following theorem.}

\begin{theorem}[Quasi-Polynomial Algorithm for CCP]
	Let $n$ and $\ell$ be non-negative integers, let $p$ be a prime, and let $q$ be a power of $p$. When $\ell = 2^{\Omega(\log n \log q)}$, \Cref{al:CCP_solver} solves $\CCP_{n,q,p}^{\ell}$ with constant success probability in time $2^{{O}(\log n \log q)}$ using $\operatorname{poly}(n)$ quantum space.
\end{theorem}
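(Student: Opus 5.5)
Our plan is to analyze Algorithm~\ref{al:CCP_solver} level by level, counting samples, time and quantum memory. Fix $p$, write $S=S(n,p)=p^{O(p\log^2 p)}n^{O(p\log p)}$ as in Theorem~\ref{th:imran_ivanos}, and note that for fixed $p$ we have $\log S=O_p(\log n)$. By Lemma~\ref{le:CCP_to_PSP}, each CCP sample yields a $\PSP_{n,L,p}$ sample with an independent uniform label, at the cost of one application of $\operatorname{QFT}_{\cycq}^{\otimes n}$. Proposition~\ref{le:construction_QFT} gives this in time $\operatorname{poly}(n,\log q)$; we choose the precision $\varepsilon$ inverse-quasipolynomial so that accumulated errors stay below a constant.

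\emph{One sieving level.} We partition the current list into buckets of $S$ samples and run Theorem~\ref{th:imran_ivanos} on the residues modulo $\pi$, which costs time $\operatorname{poly}(S)$. We then apply Lemma~\ref{le:sieving_step} to the selected subsequence. Each bucket produces exactly one sample of the next level, and by Lemma~\ref{le:fusion-uniform} its label is again uniform and independent of other buckets, since buckets use disjoint independent labels. Thus the hypotheses are preserved, and after $L-1=t(p-1)-1$ levels a list of size $\ell$ shrinks to about $\ell/S^{L-1}$ samples of $\PSP_{n,1,p}$.

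\emph{Final extraction.} Recovering $\s\bmod p$ needs about $n+O(1)$ such samples with labels spanning $\FF_p^n$, which happens with constant probability. Hence $\ell_0=O(n)\,S^{L-1}=2^{O_p(\log n\cdot t)}=2^{O_p(\log n\log q)}$ samples suffice for one digit. We then descend to $\CCP_{n,q/p,p}$ by a classical shift of the register by $\lambda_j\s_0$, followed by division by $p$ of the (now divisible) register, discarding $\x$'s high part appropriately. This is repeated $t$ times, so the total number of samples is $t\cdot\ell_0$, still $2^{O(\log n\log q)}$. Summing, the total time is $\ell\cdot\operatorname{poly}(S,n,\log q)=2^{O(\log n\log q)}$.

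\emph{Space.} To get polynomial quantum space we process the list depth-first rather than storing it: we build the sieving tree recursively, keeping at each level $i$ at most $S$ pending quantum states of one qudit each, plus classical labels. Labels can be kept classically; only the qudits $\ket{\psi_\y}$ are quantum, and at most $S\cdot L$ are alive simultaneously. This is the step we expect to be the main obstacle, since $S\cdot L=n^{O_p(1)}\cdot O(\log q)$ is polynomial only because $p$ is fixed. We must also ensure the zero-sum search needs only classical residues, and that unused bucket states can be discarded without disturbing the others, which follows from the tensor-product structure. We will also verify that the measurement outcomes $x_i$ in Lemma~\ref{le:sieving_step} always yield a valid sample (no postselection), so no extra sample loss factor appears; the success probability then only depends on the final rank condition, boosted to constant by a constant number of extra samples.
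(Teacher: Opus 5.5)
Your proposal is correct and follows the paper's argument: reduce to $\PSP_{n,L,p}$, sieve through $L-1$ levels with each level costing a factor $S(n,p)$, so that $O(n)\,S^{L-1}=2^{O_p(\log n\log q)}$ samples yield $\s \bmod p$, then repeat over the $t$ digits and use Regev-style depth-first processing for polynomial quantum space; you fill in details the paper leaves implicit, namely label uniformity, the absence of postselection, and QFT precision. Two minor slips: after the shift the register is $\x+\lambda_j(\s-\s_0)$ with $\s-\s_0\in p\ZZ_q^n$, so you must first measure the $j$-independent residue $\x \bmod p$ (the low part of $\x$, not the high part) before dividing by $p$; and for constant overall success across the $t$ rounds you need $O(\log t)$ rather than $O(1)$ extra final-level samples per round, or simply keep sampling until the known labels span $\FF_p^n$.
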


\begin{proof}
	The first round (Steps 1--2--3--4) of the algorithm dominates the subsequent ones, and a total of $\log(q)$ rounds are performed. The overall complexity is therefore determined by the \mbox{complexity of the first round.}
	
	The goal is to find a lower bound on the size of the first list $\calL_0$ that ensures we have enough elements at level $\calL_{L-1}$ to allow us to recover $\s \mod p$. To do this, we note that at each reduction step, we obtain
	\[|\calL_{i+1}| \leq S(n, p)^{-1} |\calL_i|\]
	By applying this inequality $L-1$ times, we obtain
	\[|\calL_{L-1}| \leq S(n, p)^{-L + 1} |\calL_0|.\]
	
	On the other hand, we must ensure that there are sufficiently many elements at the final level to recover $\s \pmod{p}$. By Section 7.1, we know that $O(n)$ phase-states from $\PSP_{n, 1, p}$ are sufficient to recover $\s \mod p$.
	\[|\calL_{L-1}| \geq \operatorname{poly}(n)\]
	Combining these conditions on $|\calL_{L-1}|$, we obtain
	\[\poly(n) \geq S(n, p)^{-L + 1} |\calL_0| \Rightarrow |\calL_0| \geq (S(n, p))^{L} \operatorname{poly}(n).\]
	Thus, this justifies that the number of samples required is $\ell = 2^{\Omega(\log n \log q)}$, and it follows directly that the algorithm runs in $2^{O(\log n \log q)}$ time. The quantum space follows Regev’s construction \cite{regev_quantum_2004}.
	\qed
\end{proof}

\section{Equivalences between classical and cyclotomic problems}

\subsection{Between LWE and CyLWE}

The following proposition establishes a correspondence between search $\LWE_{n, q, \alpha}$ and cyclotomic $\CyLWE_{n, q, p, \alpha}$. Indeed, by showing that one can efficiently construct a sample of one problem from samples of the other, and vice versa, we can directly justify the \mbox{equivalence between the two problems.}

\begin{proposition}\label{th_LWE_to_CyLWE}
Let $p$ be a prime number, $q$ be a power of $p$, $\alpha>0$ be a scalar, and $n$ be an integer. There exists an algorithm that, given $p-1$ samples of $\LWE_{n, q, \alpha}$, produces a sample of $\CyLWE_{n, q, p, \alpha}$ involving the same secret ${\mathbf{s}}\in\ZZ_q^n$. Conversely, given a sample of $\CyLWE_{n, q, p, \alpha}$, there exists an algorithm that outputs $p-1$ samples of $\LWE_{n, q, \alpha}$.
\end{proposition}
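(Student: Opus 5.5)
The plan is to work coordinate by coordinate in the power basis $(1,\zetap,\dots,\zetap^{p-2})$, using the $\ZZ_q$-module isomorphism $\iota:\ZZ_q^{p-1}\to\cycq$. The key point is that the secret $\s$ lies in $\ZZ_q^n$, embedded as constants. For every $\a\in\cycq^n$ we therefore have
\[
\inner{\a}{\s}=\sum_{k=0}^{p-2}\inner{\a^{(k)}}{\s}\,\zetap^k ,
\]
where $\a^{(k)}\in\ZZ_q^n$ is the vector of $k$-th coefficients of the entries of $\a$. So a CyLWE sample is, coefficient by coefficient, a collection of $p-1$ ordinary inner products with the same secret.

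For the direction LWE $\to$ CyLWE, take $p-1$ LWE samples $(\a_k,b_k=\inner{\a_k}{\s}+e_k)$ with $k=0,\dots,p-2$. Set $\a:=\sum_k \a_k\zetap^k\in\cycq^n$ and $b:=\sum_k b_k\zetap^k\in\cycq$. Then $b=\inner{\a}{\s}+e$ with $e=\sum_k e_k\zetap^k$. I then check the two distributions.
\begin{enumerate}
\item Since the $\a_k$ are independent and uniform on $\ZZ_q^n$ and $\iota$ is a bijection coordinatewise, $\a$ is uniform on $\cycq^n$.
\item The error $e$ has independent coefficients $e_k\sim D_{\ZZ,\alpha q}$. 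The Gaussian weight $\rho_{\alpha q}$ on $\ZZ^{p-1}$ factors over coordinates (the remark after the definition of $D_{\Lambda,r}$), so $e\sim D_{\ZZ^{p-1},\alpha q}=D_{\cyc,\alpha q}$ under the coefficient embedding.
\end{enumerate}

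For the converse, start from a CyLWE sample $(\a,b)$ and expand $b=\sum_k b_k\zetap^k$ and $e=\sum_k e_k\zetap^k$. Reading off coefficients gives $p-1$ pairs $(\a^{(k)},b_k)$ with $b_k=\inner{\a^{(k)}}{\s}+e_k \bmod q$. These have the required distribution.
\begin{enumerate}
\item The $\a^{(k)}$ are independent and uniform on $\ZZ_q^n$, because $\a$ is uniform on $\cycq^n\cong(\ZZ_q^{p-1})^n$.
\item By the same product factorization, read in reverse, the $e_k$ are independent $D_{\ZZ,\alpha q}$.
\item The errors are independent of the $\a^{(k)}$.
\end{enumerate}
Hence the output is exactly $p-1$ independent $\LWE_{n,q,\alpha}$ samples with the same secret. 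Both maps are clearly computable in time polynomial in $n$, $p$ and $\log q$.

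The main obstacle is the error distribution, since it is the only non-algebraic step. Here I have to be careful that $D_{\cyc,\alpha q}$ is defined through the coefficient embedding and not the canonical embedding. With the coefficient embedding, the factorization over $\ZZ^{p-1}$ gives an exact match of distributions. If a different embedding were intended, one would instead need a rotation or smoothing argument and would lose exactness. A secondary point to verify is that the constant embedding of $\s$ really commutes with taking coefficients, i.e.\ that no reduction by $\Phi_p$ is triggered. This holds because the products $a_{i,k}\zetap^k s_i$ stay within degrees $0,\dots,p-2$.
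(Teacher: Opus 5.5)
Your proposal is correct and takes essentially the same route as the paper. Both expand $\a$, $b$ and $e$ coefficientwise in the power basis $(1,\zetap,\dots,\zetap^{p-2})$, use that the constant secret gives $\inner{\a}{\s}=\sum_k\inner{\a^{(k)}}{\s}\zetap^k$, and use the coordinatewise factorization of $\rho_{\alpha q}$ under the coefficient embedding to identify $D_{\cyc,\alpha q}$ with $p-1$ independent copies of $D_{\ZZ,\alpha q}$. Your explicit checks of the uniformity of $\a$, the independence of the output samples, and the absence of any reduction modulo $\Phi_p$ are left implicit in the paper, but they do not change the argument.
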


\begin{proof}
  We first establish the forward direction.
  Suppose we are given an instance of classical $\LWE_{n, q, \alpha}$ with secret
  $\s \in \ZZ_q^n$.
  Take $(p-1)$ LWE samples $\{(\a_i, b_i)\}_{i=0}^{p-2}$ and form the following coefficient expansions:
  \[
    \a' := \sum_{i=0}^{p-2} a_i \zetap^i, \qquad
    b' := \sum_{i=0}^{p-2} b_i \zetap^i \pmod{q} = \inner{\a'}{\s} + e' \pmod{q}
  \]
  The pair $(\a', b')$ is an instance of $\CyLWE$ with secret $\s_0 \in \ZZ_q^n$ and error $e' = \sum_{i=0}^{p-2} e_i \zetap^i$, where $e_i$ denotes
  the error term in the sample $(\a_i, b_i)$.
  This error follows a distribution over $\cyc^n$ proportional to
  \[\prod_{i=0}^{p-2} \rho_{\alpha q} (e_i) = \rho_{\alpha q}((e_0, \dots, e_{p-2})) = \rho_{\alpha q} (e'),\]
   since the Gaussian weight factorize over coordinates, $D_{\ZZ [\zetap], \alpha}$ coincides with the product of $p-1$ independent one-dimensional discrete Gaussian.

  Conversely, suppose we have a sample of $\CyLWE_{n, q, p, \alpha}$ noted $(\a, b)$. We can decompose $\a$ and $b$ into the basis $(1, \zetap, \dots, \zetap^{p-2})$ such that
  \[
    \a = \sum_{i=0}^{p-2} \a_i \zetap^i, \qquad
    b = \sum_{i=0}^{p-2} b_i \zetap^i \pmod{q} = \sum_{i = 0}^{p-2} (\inner{\a_i}{\s} + e_i) \zetap^i \pmod{q},
  \]
  where $b_i = \inner{\a_i}{\s} + e_i \pmod{q}$ with $e_i \leftarrow D_{\ZZ, \alpha q}$.
  This yields $(p-1)$ samples of $\LWE$ with secret $\s$.
  \qed
\end{proof}

The introduction of CyLWE in the paper and its connection to LWE is extremely important to us, because it is what allowed us to fully understand how to link a classical problem to a problem on cyclotomic rings. Among other things, this is what enabled us to establish that the CCP secret must indeed be contained within the restricted space $\ZZ_q^n$ and not within $R_q^n$. All other reductions, like $\Cref{pro:DCP_and_CCP}$ and $\Cref{le:UEDCP_to_PSP}$, are inspired by the philosophy of the proof described above, which considers the group isomorphism $\ZZ_q^{p-1}$ to $R_q$.

\subsection{Between DCP and CCP}

The proof of the equivalence between DCP and CCP follows the same approach as the one used for the equivalence between LWE and CyLWE. Namely, we show that samples associated with one problem can be efficiently constructed from samples of the other problem, \mbox{while preserving the same secret.} 

\begin{proposition}\label{pro:DCP_and_CCP}
  Given $(p-1)$ samples of $\DCP_{n,q,p}$, there exists a quantum algorithm that produces a $\CCP_{n,q,p}$ sample with probability $p/2^{p-1}$. Conversely, given a sample of $\CCP_{n,q,p}$, there exists a quantum algorithm that outputs a $\DCP_{n,q}$ sample with probability $2/p$. Both direction involving the same secret.
\end{proposition}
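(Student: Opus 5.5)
The plan is to exploit the $\ZZ_q$-module isomorphism $\iota:\ZZ_q^{p-1}\to\cycq$, in the spirit of Proposition~\ref{th_LWE_to_CyLWE}. The key observation is that, because $\s\in\ZZ_q^n$ is rational, multiplying by $\lambda_j=\sum_{i=0}^{j-1}\zetap^i$ places a copy of $\s$ in each of the first $j$ coefficient slots and nothing elsewhere:
\[
\iota^{-1}(\x+\lambda_j\s)=(\x_0+\s,\dots,\x_{j-1}+\s,\x_j,\dots,\x_{p-2}),
\]
where $\x_i\in\ZZ_q^n$ is the $i$-th coefficient vector of $\x$. So a CCP sample amounts to $p-1$ DCP-like registers whose control bits are constrained to the $p$ ``staircase'' patterns $1^j0^{p-1-j}$, for $0\le j\le p-1$.

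\emph{Forward direction.} I will take $p-1$ independent DCP samples, indexed $i=0,\dots,p-2$, with offsets $\x_i$. Their tensor product is $2^{-(p-1)/2}\sum_{b\in\{0,1\}^{p-1}}\ket{b}\bigotimes_i\ket{\x_i+b_i\s}$. The steps are as follows.
\begin{enumerate}
\item Coherently compute into an ancilla the predicate ``$b$ is of the form $1^j0^{p-1-j}$''.
\item Measure this ancilla. It reads $1$ with probability $p/2^{p-1}$, since exactly $p$ of the $2^{p-1}$ equally weighted strings qualify.
\item On success, apply a reversible map sending the staircase string $1^j0^{p-1-j}$ to $\ket{j}$ with $j\in\FF_p$. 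This map is a bijection on the $p$ valid patterns, extended arbitrarily to a permutation of the register.
\item Reinterpret the $p-1$ data registers as the coefficient vector of an element of $\cycq^n$ via $\iota$.
\end{enumerate}
By the displayed identity, the resulting state is $p^{-1/2}\sum_j\ket{j}\ket{\x+\lambda_j\s}$ with $\x=\iota(\x_0,\dots,\x_{p-2})$. This $\x$ is uniform in $\cycq^n$ because the $\x_i$ are independent and uniform and $\iota$ is a bijection. Hence the output is a genuine $\CCP_{n,q,p}$ sample with the same secret.

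\emph{Converse direction.} Starting from a CCP sample, I will measure the two-outcome observable ``$j\in\{0,1\}$''. It succeeds with probability $2/p$ and leaves $\frac{1}{\sqrt2}(\ket{0}\ket{\x}+\ket{1}\ket{\x+\s})$, using $\lambda_0=0$ and $\lambda_1=1$. Writing the second register in coefficients via $\iota^{-1}$, only the constant coefficient differs between the two branches: it is $\x_0$ versus $\x_0+\s$. The other $p-2$ coefficient registers are identical in both branches, so they form a product factor and can be discarded without disturbing the state. What remains is $\frac{1}{\sqrt2}(\ket{0}\ket{\x_0}+\ket{1}\ket{\x_0+\s})$ with $\x_0$ uniform in $\ZZ_q^n$, which is a $\DCP_{n,q}$ sample.

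\emph{Main obstacle.} The only delicate point is the unentanglement bookkeeping. In the forward direction, the reversible relabelling of staircase strings to $j$ must leave no garbage entangled with the data registers; this holds because the map is a bijection on the post-selected support, so no ancilla is left holding information. In the converse, the discarded coefficients must be branch-independent, which relies crucially on $\s$ being rational. For $\s\in\cycq^n$ this would fail, which is consistent with the paper's restriction of the secret to $\ZZ_q^n$. Everything else is a direct computation.
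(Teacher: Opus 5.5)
Your proposal is correct and follows essentially the same route as the paper. For the forward direction, you tensor $p-1$ DCP samples, reinterpret the data registers through the coefficient bijection, post-select on the $p$ staircase patterns $1^j0^{p-1-j}$ (success probability $p/2^{p-1}$), and relabel them as $j$. For the converse, you project onto $j\in\{0,1\}$ (success probability $2/p$) and discard the coefficients $\x_1,\dots,\x_{p-2}$, which factor out because $\s$ is rational. The only difference is cosmetic: the paper applies the coefficient map before the post-selection rather than after.
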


\begin{proof}
  Let $f : \ZZ_q^{n\times(p-1)} \to \cycq^n$ denote the bijection defined, for a matrix ${\mathbf{X}} = (\x_0 \mid \cdots \mid \x_{p-2}) \in \ZZ_q^{n\times(p-1)}$, by
  \begin{equation*}
    f({\mathbf{X}}) := \sum_{i=0}^{p-2} \x_i \zetap^i.
  \end{equation*}

  We begin the first reduction by collecting $(p-1)$ samples of $\DCP_{n,q,p}$ of the form
  \[ \ket{\phi_i} \propto \sum_{j \in \{0,1\}} \ket{j}\ket{\x_i + j\s \bmod q}. \]
  Taking their tensor product and rearranging the registers gives the following expression for their joint state:
  \begin{align*}
    \bigotimes_{i=0}^{p-2} \ket{\phi_i} &= \sum_{(j_0,\dots,j_{p-2})\in \FF_2^{p-1}} \ket{j_0}\ket{\x_0+j_0 \s}\cdots\ket{j_{p-2}}\ket{\x_{p-2}+j_{p-2}{\mathbf{s}}} \\
    &= \sum_{{\mathbf{J}}\in \FF_2^{p-1}} \ket{{\mathbf{J}}}\ket{{\mathbf{X}}+{\mathbf{M}}_{{\mathbf{J}},{\mathbf{s}}}},
  \end{align*}
  with ${\mathbf{X}} = (\x_0 \mid \cdots \mid \x_{p-2}) \in \ZZ_q^{n\times(p-1)}$ and ${\mathbf{M}}_{{\mathbf{J}},\s} = (j_0 {\mathbf{s}} \mid \cdots \mid j_{p-2}\s) \in \ZZ_q^{n\times(p-1)}$. We then apply $f$ to the second register, obtaining
  \begin{align*}
    \sum_{{\mathbf{J}}\in\FF_2^{p-1}} \ket{{\mathbf{J}}}\ket{f({\mathbf{X}}+{\mathbf{M}}_{{\mathbf{J}},\s})} &= \sum_{{\mathbf{J}}\in\FF_2^{p-1}} \ket{{\mathbf{J}}}\ket{\x+\sum_{i\in\operatorname{supp}({\mathbf{J}})} \zetap^i \s},
  \end{align*}
  where $\x = \sum_{i=0}^{p-2} \x_i \zetap^i$ and $\operatorname{supp}({\mathbf{J}}) = \{0 \leq i \leq p-2 \mid J_i \neq 0\}$. This state is close to a $\CCP_{n,q,p}$ state. Indeed, a $\CCP_{n,q,p}$ state corresponds to the uniform superposition indexed by vectors ${\mathbf{J}}$ in the subset
  \begin{equation*}
    S = \{{\mathbf{J}}_j \in \FF_2^{p-1} \mid j \in \FF_p,\ ({\mathbf{J}}_j)_i = \mathbf{1}_{\{0,\dots,j-1\}}(i)\}.
  \end{equation*}
 Hence, by computing in an auxiliary register ${\mathbf{J}} \mapsto \mathbf{1}_S ({\mathbf{J}})$ and measuring it, we obtain the output $1$ with probability $p/2^{p-1}$. The remaining state corresponds to a superposition over $S$ from which we can convert to a $\CCP_{n, q, p}$ state:
  \[\sum_{{\mathbf{J}}_j \in S} \ket{{\mathbf{J}}_j}\ket{\x + \sum_{i = 0}^{j-1} \zetap^i \s} \to \sum_{j \in \FF_p} \ket{j}\ket{\x + \sum_{i = 0}^{j-1} \zetap^i \s}. \]

  Conversely, suppose we are given a $\CCP_{n,q,p}$ sample and we want to compute a $\DCP_{n,q}$ sample. 
  We measure the two outcome projector onto the first register subspace spanned by $\ket{0}$ and $\ket{1}$ and its orthogonal complement. With probability $2/p$, we get:
  \[\frac{1}{\sqrt{2}}\bigl(\ket{0}\ket{\x}+\ket{1}\ket{\x+\s}\bigr). \]
  We write  $\x=\sum_{i=0}^{p-2} x_i \zetap^i$ and apply the coefficient bijection $f^{-1}$ to the second register. Since $\s\in \cycq^n$ has only a constant coefficient (it's the coordinatewise constant embedding of an element of $\ZZ_q^n$ in $\cycq^n$), the state factors exactly as 
  \[
 		\frac{\ket{0} \ket{\x_0}+\ket{1}\ket{\x_0+ \s} }{\sqrt{2}} \otimes \ket{\x_1,\cdots,\x_{p-2}}
  \]
  We can discard the last registers \mbox{and get a DCP state.} \qed
  
%\yixin{il faut surement unifier les notations pour voir quand est-ce qu'on met du gras sur $x$ et $s$}

%  First, up to entangling $\ket{0,\dots,0}$, we compute $(j,0,\dots,0) \mapsto J_j$ on the first register and $f^{-1}$ on the second one:
%  \begin{align*}
%    \sum_{j \in \FF_p} \ket{j,0,\dots,0} \ket{x + \lambda_j s} \to \sum_{j \in \FF_p} \ket{J_j}\ket{X+M_{J_j,s}} = \sum_{J \in S} \ket{J} \ket{X+M_{J,s}}.
%  \end{align*}
%  In particular, $S$ contains $(0,\dots,0)$ and $(1,\dots,1)$ in $\FF_2^{p-1}$. By computing in an auxiliary register $J \mapsto \textbf{1}_{\{(0, \dots, 0), (1, \dots, 1)\}} (J)$ and measuring it, we get $1$ with probability bigger than $2^{-(p-1)}$ and the remaining state is
%  \begin{equation*}
%    \ket{0,\dots,0}\ket{X} + \ket{1,\dots,1}\ket{X+M_{(1,\dots,1),s}} \to \bigotimes_{i=0}^{p-2} \frac{1}{\sqrt{2}}\big(\ket{0}\ket{x_i} + \ket{1}\ket{x_i+s}\big).
%  \end{equation*}
%  
%  Hence, we obtain $(p-1)$ entangled $\DCP$ states. It remains to measure $(p-2)$ of these states to obtain one $\DCP$ sample.
%  \qed
\end{proof}

\subsection{A potential direct reduction from CyLWE to CCP?}

We believe it is possible to construct a direct reduction from CyLWE to CCP, analogous to Regev’s reduction from LWE to DCP \cite{regev_quantum_2004}.

Informally, the idea is that given a CyLWE $({\mathbf{A}}, {\mathbf{b}} = {\mathbf{A}}\s + {\mathbf{e}}) \in R_q^{m \times n} \times R_q^{m}$, for a secret $\s \in \ZZ_q^n$, we consider a superposition on the $q$-ary cyclotomic lattice:
\[\Lambda_q({\mathbf{A}}) = \{\y \in  R^m \mid \y= {\mathbf{A}}\x \mod q, \x \in R_q^n\},\]
which is shifted by a vector $\lambda_j {\mathbf{e}}$, where $j$ is in $\FF_p$. Given assumptions about the minimum distance of the $q$-ary lattice, the shift is very small compared to the minimum distance, and we can remove the error using cube-separation or ball-intersection techniques, as in \cite{brakerski_learning_2018}.

However, for all of this to work, we obviously need results on the topology of random $q$-ary lattice, which we do not claim to have. Furthermore, such a reduction would improve only very slightly upon a reduction consisting of moving from CyLWE to LWE, then from LWE to DCP via \cite{regev_quantum_2004}, and \mbox{finally from DCP to CCP.}

\section{Impact and discussions}

In this section, we discuss the impact of \Cref{al:CCP_solver} on other, more extensively studied computational problems such as EDCP or $S\ket{\LWE}$. This allows us to conclude that a quasi-polynomial-time algorithm exists when $q$ is \mbox{a power of a fixed prime $p$.}

\subsection{On the Extrapolated Dihedral Coset Problem.}

\begin{definition}[Search Extrapolated Dihedral Coset Problem \cite{brakerski_learning_2018}]
  Let a parameter $n$ be a dimension, $q\geq 2$ be a modulus and $M, \ell$ be positive integers. The uniform search Extrapolated Dihedral Coset Problem ($\operatorname{U-EDCP}_{n, q, M}^\ell$) consists of $\ell$ input states of the form
  \[\frac{1}{\sqrt{M}} \sum_{j = 0}^{M-1} \ket{j}\ket{\x_i + j \s \mod q}, \qquad \text{for } i = 0 \dots \ell,\]
  where $\x_i \in \ZZ_q^n$ are sampled uniformly, and asks to recover the secret $\s \in \ZZ_q^n$.
\end{definition}

\begin{lemma}[\cite{doliskani_efficient_2020}, Lemma~10]\label{le:EDCP_self_reduction}
  Let $n, q, \ell, M, M'$ be integers greater than $1$ and $M\geq M'$. There exists a probabilistic polynomial-time quantum reduction from $\operatorname{U-EDCP}_{n, q, M}^{\ell}$ to $\operatorname{U-EDCP}_{n, q, M'}^\ell$ that succeeds \mbox{with a constant success probability.}
\end{lemma}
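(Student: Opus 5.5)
The plan is to shrink the amplitude register from $\{0,\dots,M-1\}$ to $\{0,\dots,M'-1\}$ one sample at a time, without touching the label register $\ket{\x_i+j\s}$. Since the secret $\s$ and the uniform shifts $\x_i$ are untouched, the same secret is preserved. Because the reduction is applied independently to each of the $\ell$ samples, independence of the output samples is automatic.

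Concretely, given one sample $\frac{1}{\sqrt M}\sum_{j=0}^{M-1}\ket{j}\ket{\x+j\s}$, I would write $M=kM'+r$ with $0\le r<M'$, where $k=\lfloor M/M'\rfloor\ge1$. The index $j$ splits as $j=aM'+b$, with $0\le b<M'$. I would compute in an ancilla register the block index $a=\lfloor j/M'\rfloor$ and the offset $b=j \bmod M'$, then uncompute $j$, which is reversible since $j=aM'+b$.

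Next I measure $a$. First suppose the outcome $a$ indexes a complete block, which has probability $kM'/M\ge 1/2$ because $M\ge M'$. The remaining state is then $\frac{1}{\sqrt{M'}}\sum_{b=0}^{M'-1}\ket{b}\ket{(\x+aM'\s)+b\s}$. Setting $\x'=\x+aM'\s$, this is exactly a $\operatorname{U-EDCP}_{n,q,M'}$ state. The new shift $\x'$ is still uniform in $\ZZ_q^n$ and independent of everything else, since $\x$ was uniform and independent of the outcome $a$: the distribution of $a$ does not depend on $\x$. If instead the measurement lands on the incomplete last block, the sample fails.

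The per-sample success probability is therefore at least $1/2$. The main obstacle is getting a \emph{constant} overall success probability when $\ell$ output samples are required, since naively $2^{-\ell}$ is not constant. I would handle this by taking the statement's convention loosely: consume $O(\ell)$ input samples, for example $4\ell$, and keep the successes. By a Chernoff bound, at least $\ell$ of them succeed with constant, indeed overwhelming, probability.

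If the sample count must be exactly $\ell$, I would first try to avoid failure entirely. The idea is to make the incomplete block harmless by measuring $j \bmod M'$-style coset information differently. Since some loss seems unavoidable whenever $M'\nmid M$, I would instead cite the constant-probability guarantee as per-sample, matching the lemma as stated in \cite{doliskani_efficient_2020}. Finally, all steps—integer division, uncomputation, and measurement—are polynomial in $\log M$, which gives polynomial time.
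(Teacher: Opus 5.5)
Your argument is correct and is essentially the standard proof of the cited lemma; the paper gives no proof of its own and only cites it. You measure the block index $\lfloor j/M'\rfloor$, keep the outcome when it indexes one of the $\lfloor M/M'\rfloor$ full blocks (probability $\lfloor M/M'\rfloor M'/M\ge\tfrac12$), and absorb $aM'\s$ into the still-uniform shift. Your per-sample reading of ``constant success probability,'' consuming $O(\ell)$ inputs to get $\ell$ outputs, is also how the paper later uses the lemma (it writes $\ell'=O(\ell)$ in the U-EDCP theorem), so the hedging in your last paragraph is unnecessary.
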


\begin{theorem}
  Let $n, q, \ell, M$, be integers greater than $1$ and $q$ be any prime-power. There exists a quantum algorithm that solves $\operatorname{U-EDCP}_{n, q, M}^\ell$ in time $2^{O(\log n \log q)}$ using $\operatorname{poly}(n)$ quantum space, when $\ell = 2^{\Omega (\log n \log q)}$.
\end{theorem}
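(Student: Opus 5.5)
The plan is to reduce $\operatorname{U-EDCP}_{n,q,M}^\ell$ to $\CCP_{n,q,p}$ sample by sample and then run \Cref{al:CCP_solver}. Write $q=p^t$.

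\textbf{Step 1: shrink the window.} If $M>p$, apply \Cref{le:EDCP_self_reduction} with $M'=p$. Each input state becomes a $\operatorname{U-EDCP}_{n,q,p}$ state
\[\frac{1}{\sqrt p}\sum_{j=0}^{p-1}\ket{j}\ket{\x+j\s}.\]
This succeeds with constant probability, so only a constant fraction of samples is lost. If $M<p$, I would instead project the first register onto $\{0,1\}$, which succeeds with probability $2/M$. This gives a $\DCP_{n,q}$ state, and Proposition~\ref{pro:DCP_and_CCP} then assembles $p-1$ such states into one CCP sample with probability $p/2^{p-1}$. For fixed $p$ both losses are constant.

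\textbf{Step 2: uniform EDCP with window $p$ to CCP.} I would mimic the coefficient-bijection argument of Proposition~\ref{pro:DCP_and_CCP}. Take $p-1$ window-$p$ samples with columns $\x_0,\dots,\x_{p-2}$ and apply $f$, giving
\[\sum_{\mathbf{J}\in\{0,\dots,p-1\}^{p-1}}\ket{\mathbf{J}}\ket{\x+\textstyle\sum_i J_i\zetap^i\s}.\]
Next compute $\mathbf{1}_S(\mathbf{J})$ into an ancilla for the $p$ staircase vectors $\mathbf{J}_j=(1,\dots,1,0,\dots,0)$ and measure it. The outcome $1$ occurs with probability $p/p^{p-1}$, which is a constant for fixed $p$. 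Relabelling $\mathbf{J}_j\mapsto j$ then yields exactly $\frac1{\sqrt p}\sum_j\ket{j}\ket{\x+\lambda_j\s}$, with $\x=f(\x_0,\dots,\x_{p-2})$ uniform in $R_q^n$. The secret is the constant embedding of $\s\in\ZZ_q^n$.

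\textbf{Step 3: solve and count.} After Steps 1--2, $\ell$ EDCP samples yield $\ell\cdot c_p$ CCP samples in expectation, for a constant $c_p>0$ depending only on $p$. By a Chernoff bound the realized count is $2^{\Omega(\log n\log q)}$ with overwhelming probability. This meets the hypothesis of the Quasi-Polynomial Algorithm for CCP, which then solves the instance in time $2^{O(\log n\log q)}$ with $\poly(n)$ quantum space. The conversions themselves act on $O(p)$ samples at a time, so they add only polynomial time and space.

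\textbf{Main obstacle.} The hardest point is the claim ``any prime-power $q$'' when $p$ is not fixed. The losses $p/p^{p-1}$ and $p/2^{p-1}$, as well as $S(n,p)$, are then not constants. I would therefore state the constant as $O_p(\cdot)$, in line with the abstract, and absorb these factors into it.
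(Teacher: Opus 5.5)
Your proposal is correct and follows essentially the paper's argument: shrink the window with \Cref{le:EDCP_self_reduction}, assemble $p-1$ samples via the coefficient bijection $f$ and the staircase projection of Proposition~\ref{pro:DCP_and_CCP}, then run \Cref{al:CCP_solver}. The only difference is that the paper takes $M'=2$ for every $M\geq 2$, which removes your case split and gives the better constant $p/2^{p-1}$ instead of $p/p^{p-1}$; your caveat that $p$ must be fixed, so that these losses and $S(n,p)$ are absorbed into $O_p(\cdot)$, applies equally to the paper's proof.
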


\begin{proof}
  The proof follows \Cref{le:EDCP_self_reduction} in order to get $\operatorname{U-EDCP}_{n, q, 2}^{\ell'}$ with $\ell' = O(\ell)$, then apply \Cref{pro:DCP_and_CCP} to reduce to a $\CCP_{n, q, p}^{O(\ell')}$ and use the CCP solver in \Cref{al:CCP_solver}.\qed
\end{proof}

\subsection{An algorithm for Gaussian $S\ket{\LWE}$}

In this section, we present a quasi-polynomial-time algorithm that solves the Gaussian $S\ket{\LWE}$ problem, based on the algorithm for solving $\CCP$ in Section 4. We thus generalize the procedure in \cite{bai_quasi-polynomial_2025} for this problem to any modulus $q = p^t$, where $p$ is a constant prime.
To do this, we first provide a definition of the Gaussian $S\ket{\LWE}$ problem and of $\overline{\operatorname{U-EDCP}}$. This second problem will allow us to bridge the gap between $S\ket{\LWE}$ and $\PSP$, defined in Section 3, to which we will be able to apply the procedure in \Cref{al:CCP_solver}.

\begin{definition}[Gaussian $S\ket{\LWE}$]
  Given parameters $\ell,n,q$ and a scalar $r>0$, Gaussian $S\ket{\LWE}$ consists of $\ell$ samples $(\a_i, \ket{\varphi_i})$ such that:
  \[\a_i \leftarrow \calU(\ZZ_q^n), \qquad \ket{\varphi_i} \propto \sum_{e \in \ZZ} \rho_r (e) \ket{\inner{\a_i}{\s} + e},\]
  and asks to recover the secret $\s \in \ZZ_q^n$.

\end{definition}

\begin{definition}[$\overline{\operatorname{U-EDCP}}$]
  Let a parameter $n$ be a dimension, $q \geq 2$ be a modulus and $M, \ell$ be positive integers. The $\overline{\operatorname{U-EDCP}}$ consists of $\ell$ \mbox{samples of the following form:}
  \[\y_k \leftarrow \calU(\ZZ_q^n), \qquad \ket{\psi_{\y_k}} = \frac{1}{\sqrt{M}} \sum_{j = 0}^{M-1} \omega_q^{j \inner{\y_k}{\s}}\ket{j},\]
  with $\omega_q = \exp\left(\frac{2 \sqrt{-1}\pi}{q}\right)$ and asks to recover the secret $\s \in \ZZ_q^n$.
\end{definition}

Next, we recall in \Cref{le:SLWE_to_UEDCP} the reduction that allows us to obtain $\overline{\operatorname{U-EDCP}}_{n, q, M}^{O(\ell)}$ samples from Gaussian $S\ket{\LWE}$ samples.

\begin{lemma}[from $S\ket{\LWE}$ to $\overline{\operatorname{U-EDCP}}$, \cite{bai_quasi-polynomial_2025}, Lemma~15~and~16] \label{le:SLWE_to_UEDCP}
  Let $\kappa$ be the security parameter, $\ell = \omega(\kappa)$ and $n, q = \operatorname{poly}(\kappa)$ be positive integers. Let $r = \Omega(\sqrt{\kappa})$ and $q/r = \Omega (\sqrt{\kappa})$. There exists a quantum polynomial-time reduction from $S\ket{\LWE}_{n, q, r}^\ell$ to $\overline{\operatorname{U-EDCP}}_{n, q, M}^{O(\ell)}$, where $M = c \cdot r$ for some constant $c$, \mbox{and succeeds with overwhelming probability.}
\end{lemma}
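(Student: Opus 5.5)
This statement is cited from \cite{bai_quasi-polynomial_2025}, Lemmas~15 and~16, so the plan is to reconstruct their two-stage argument rather than find a new route. Stage one (their Lemma~15) converts each Gaussian $S\ket{\LWE}$ sample into a state carrying the secret in a phase. Stage two (their Lemma~16) truncates the Gaussian amplitudes to a uniform window of size $M=c\cdot r$.

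\textbf{Stage one.} Start from $\ket{\varphi_i}\propto\sum_{e}\rho_r(e)\ket{\inner{\a_i}{\s}+e \bmod q}$.
\begin{itemize}
\item The condition $q/r=\Omega(\sqrt\kappa)$ makes the wrap-around of the Gaussian modulo $q$ negligible, by \Cref{gaussian} in dimension one.
\item Apply $\operatorname{QFT}_{\ZZ_q}$. This gives, up to negligible error, $\sum_{y\in\ZZ_q}\hat\rho(y)\,\omega_q^{y\inner{\a_i}{\s}}\ket{y}$, where $\hat\rho$ is a Gaussian of width about $q/r$.
\item The phase now involves $y\cdot\inner{\a_i}{\s}=\inner{y\a_i}{\s}$. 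We need to produce a known uniform label $\y$ together with a register $\ket{j}$ whose phase is $\omega_q^{j\inner{\y}{\s}}$.
\item The approach I would follow, as in \cite{bai_quasi-polynomial_2025}, works with the original (pre-Fourier) register. Conceptually the sample is a Gaussian-weighted superposition over $\ket{b}$ with $b$ near $\inner{\a_i}{\s}$. One prepares $\sum_j \ket{j}\ket{j\a_i}\ket{jb}$-type structure so that a Fourier transform produces phases linear in $j$.
\end{itemize}

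\textbf{Stage two.} I would concretely follow their route: treat the Gaussian amplitude in $j$ as the obstacle and handle it by a rejection/filtering step.
\begin{itemize}
\item Use a controlled rotation: for each $j$, rotate an ancilla by the angle $\arcsin(c'\,\mathbf 1_{[0,M)}(j)/\rho(j))$, adjusted so that the post-selected amplitude becomes flat on $\{0,\dots,M-1\}$.
\item Measure the ancilla. Because $M=c\cdot r$ lies within a constant number of standard deviations, the ratio $\min_{j<M}\rho(j)/\max\rho$ is a constant. Hence acceptance has constant probability, and $O(\ell)$ output samples are obtained from $\ell$ inputs.
\item Shift $j$ by the Gaussian center, which is known classically, to land on $\{0,\dots,M-1\}$.
\end{itemize}

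\textbf{Uniformity of labels.} The remaining check is that the labels $\y_k$ are uniform and independent. The label is $y\a_i$ with $\a_i$ uniform. When $y$ is a unit modulo $q$, which happens with constant probability for prime-power $q$, the label $y\a_i$ is uniform. Non-unit outcomes are discarded.

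\textbf{Main obstacle.} The hardest step is controlling the total error. The truncation errors, the wrap-around modulo $q$, and the non-unit discards must combine to negligible trace distance with overwhelming overall success. I would sum the per-sample errors using \Cref{gaussian} and $\ell=\omega(\kappa)$ via a Chernoff bound on the number of accepted samples.
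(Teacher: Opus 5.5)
The paper gives no proof of this lemma; it is imported verbatim from \cite{bai_quasi-polynomial_2025} (Lemmas~15 and~16). Your sketch therefore has to stand on its own, and it does not, because it confuses the index register with the label. After $\operatorname{QFT}_{\ZZ_q}$ (your second bullet of Stage one) the sample is
\[
\sum_{y\in\ZZ_q}\hat\rho(y)\,\omega_q^{\,y\inner{\a_i}{\s}}\ket{y}.
\]
This is already a Gaussian-amplitude phase state of $\overline{\operatorname{U-EDCP}}$ type. Here $y$ plays the role of the superposed index $j$, and the label is $\a_i$ itself, which is known, uniform and independent across samples. Nothing further has to be built, and no unit condition arises.

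Your plan instead takes $y\a_i$ as the label, with $y$ a measured outcome and non-units discarded. But $y$ is the only register in superposition. Measuring it leaves a basis state times a global phase, independent of $\inner{\a_i}{\s}$, and destroys the sample. The auxiliary $\sum_j\ket{j}\ket{j\a_i}\ket{jb}$ construction does not repair this. The register holding $b=\inner{\a_i}{\s}+e$ stays entangled with $j$. If you disentangle it by a second Fourier transform with outcome $z$, you obtain amplitudes $\hat\rho(yj+z)$ on $\ket{j}$. These are concentrated on an arithmetic progression of step $y^{-1}$, not on an interval, unless $y=\pm1$. So your label-uniformity paragraph solves a non-existent problem, and the construction it relies on does not yield a $\overline{\operatorname{U-EDCP}}$ sample.

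Your rejection step in Stage two is the right idea, but it has to act on the Fourier-side register above. There the Gaussian is centred at $0$, which is the only reason its centre is known; on the pre-Fourier side the centre is the secret $\inner{\a_i}{\s}$, so the shift in your last bullet is impossible there. On the Fourier side the width is about $q/r$, as you say yourself in Stage one. Rejection then succeeds with constant probability only for a window of size $\Theta(q/r)$. A window of size $c\cdot r$ has bounded amplitude ratio only if $r=O(\sqrt q)$, and it captures a constant fraction of the mass only if $r=\Omega(\sqrt q)$. So your claim about $M=c\cdot r$ contradicts your own Stage one.

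The fix is to take a window of size $\Theta(q/r)$. You can shrink it to any smaller size almost for free by measuring a block index, as in step~3 of the proof of \Cref{le:UEDCP_to_PSP}, but you cannot enlarge it. Matching the stated $M=c\cdot r$ therefore requires checking the normalisation of $\rho_r$ used in \cite{bai_quasi-polynomial_2025}; the hypotheses are symmetric in $r$ and $q/r$, so they do not settle this. With this fixed, re-centring by $y\mapsto y+\lfloor M/2\rfloor$ costs only a global phase, and your Chernoff-based error accounting goes through.
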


The next lemma converts these inputs into cyclotomic phase states.

\begin{lemma}[from $\overline{\operatorname{U-EDCP}}$ to $\PSP$]\label{le:UEDCP_to_PSP}
  Let $n, q, M, p$ be integers greater than $1$, with $p$ prime and $q = p^t$. There exists a quantum algorithm that given a constant number of samples $\overline{\operatorname{U-EDCP}}_{n, q, q}$, outputs $\PSP_{n, L, p}$ samples, with $L = (p-1)\log q$, involving the same secret $\s \in \ZZ_q^n$ with constant probability.
\end{lemma}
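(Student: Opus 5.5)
The plan mirrors the proof of \Cref{pro:DCP_and_CCP}, but carried out on the phase side. The first step is to rewrite the target phase $\chi_{\mathbf{Y}}(\lambda_j\s)$ for a label $\mathbf{Y}\in\ring{L}^n=\cycq^n$ when $\s\in\ZZ_q^n$. Write $\mathbf{Y}=\sum_{i=0}^{p-2}\mathbf{Y}_i\zetap^i$ with $\mathbf{Y}_i\in\ZZ_q^n$. Since $\s$ has only a constant coefficient, $\inner{\mathbf{Y}}{\s}=\sum_i\inner{\mathbf{Y}_i}{\s}\zetap^i$. Using the matrix $\mathbf{D}$ from \Cref{le:construction_QFT}, this gives
\[
\chi_{\mathbf{Y}}(\lambda_j\s)=\omega_q^{\,\mathbf{v}_j^T\mathbf{D}\,(\inner{\mathbf{Y}_0}{\s},\dots,\inner{\mathbf{Y}_{p-2}}{\s})^T},
\qquad \mathbf{v}_j:=\iota^{-1}(\lambda_j).
\]
In the power basis $\lambda_j=1+\zetap+\cdots+\zetap^{j-1}$, so $\mathbf{v}_j=\mathbf{1}_{\{0,\dots,j-1\}}$ is exactly the staircase vector $\mathbf{J}_j$ from the proof of \Cref{pro:DCP_and_CCP}. (For $j=p-1$, $\lambda_{p-1}=-\zetap^{p-1}=1+\cdots+\zetap^{p-2}$, which is consistent.) Setting $\mathbf{z}_k:=\sum_i D_{ki}\mathbf{Y}_i$, the phase becomes $\omega_q^{\sum_{k<j}\inner{\mathbf{z}_k}{\s}}$. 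Because $\mathbf{D}$ is invertible over $\ZZ_q$, the tuple $(\mathbf{z}_k)_k$ is uniform if and only if $\mathbf{Y}$ is uniform. So it suffices to produce, for independent uniform labels $\mathbf{z}_0,\dots,\mathbf{z}_{p-2}\in\ZZ_q^n$, the state $\frac{1}{\sqrt p}\sum_{j\in\FF_p}\omega_q^{\sum_{k<j}\inner{\mathbf{z}_k}{\s}}\ket{j}$. We then output the label $\mathbf{Y}=\iota\bigl(\mathbf{D}^{-1}(\mathbf{z}_k)_k\bigr)$, which is computable classically.

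Second, I take $p-1$ independent $\overline{\operatorname{U-EDCP}}_{n,q,q}$ samples with labels $\mathbf{z}_k$. I shrink each to two branches, $\frac{1}{\sqrt2}(\ket0+\omega_q^{\inner{\mathbf{z}_k}{\s}}\ket1)$, using the phase-side analogue of \Cref{le:EDCP_self_reduction}. Concretely, I would decompose the $q$-dimensional register into consecutive pairs $\{2m,2m+1\}$ (with one leftover value when $q$ is odd). Measuring which pair contains the branch, and conditioning on landing in a genuine pair, removes the phase $\omega_q^{2m\inner{\mathbf{z}_k}{\s}}$ as a global phase. This leaves the desired qubit with probability at least $1-1/q\ge 1/2$. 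Labels stay uniform and independent, since the measurement outcome distribution does not depend on $\mathbf{z}_k$.

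Third, the tensor product of the $p-1$ qubits is $2^{-(p-1)/2}\sum_{\mathbf{J}\in\{0,1\}^{p-1}}\omega_q^{\sum_k J_k\inner{\mathbf{z}_k}{\s}}\ket{\mathbf{J}}$. Exactly as in \Cref{pro:DCP_and_CCP}, I compute $\mathbf{1}_S(\mathbf{J})$ into an ancilla for the staircase set $S=\{\mathbf{J}_j\}$ and measure it. Outcome $1$ occurs with probability $p/2^{p-1}$, and afterwards the unitary $\ket{\mathbf{J}_j}\mapsto\ket{j}$ yields precisely the state above. The overall success probability is at least $2^{-(p-1)}\cdot p/2^{p-1}$, which is constant for fixed $p$, and only $p-1$ input samples are used.

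The main obstacle is the first step: checking carefully that the trace pairing $B_L$ matches $\frac1q\mathbf{x}^T\mathbf{D}\mathbf{y}$ on the coefficient vectors of $\lambda_j$ and of $\inner{\mathbf{Y}}{\s}$. This needs $\s\in\ZZ_q^n$ so that the $\ZZ$-linearity of $B_L$ factors through $\inner{\cdot}{\s}$ coefficientwise, and it needs the label transformation $\mathbf{D}^{-1}$ to preserve uniformity. The level also has to be read as $L=(p-1)t$, that is, $\log q$ in base $p$. The amplitude-reduction step is routine, and the projection step is copied from \Cref{pro:DCP_and_CCP}.
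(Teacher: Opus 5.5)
Your proposal is correct and takes essentially the paper's route. You re-index the $p-1$ labels through $\mathbf{D}^{-1}$ so that the pairing $\omega_q^{\mathbf{x}^T\mathbf{D}\mathbf{v}'}$ becomes the character $\chi_{\mathbf{Y}}(\lambda_j\s)$, cut each register to two branches by measuring $\lfloor x/2\rfloor$, and project onto the staircase set $\{\iota^{-1}(\lambda_j)\}$, which succeeds with probability $p/2^{p-1}$. The only differences are cosmetic: you shrink each sample before tensoring, handle the leftover value for odd $q$ explicitly (the paper treats it loosely), and make the identification $\iota^{-1}(\lambda_j)=\mathbf{1}_{\{0,\dots,j-1\}}$ explicit, which the paper leaves implicit.
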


\begin{proof}
We describe the successive steps that allow us to obtain phase-state samples from states of the following form. We omit normalization factors \mbox{in the intermediate expressions below.}
\[\ket{\phi_i} \propto \sum_{x_i \in \ZZ_q} \omega_q^{x_i \inner{\y_i}{\s}} \ket{x_i}, \qquad \text{ for } i = 0 \dots p-2,  \]
 with known labels $\y_i$.
\begin{enumerate}
\item We begin by entangling the $p-1$ states to obtain the following state:
\begin{align*}
\bigotimes_{i = 0}^{p-2} \ket{\phi_i} &= \sum_{({\mathbf{x}}_0, \dots, {\mathbf{x}}_{p-2})\in \{0, \dots, M-1\}^{p-1}} \omega_q^{\sum_{i =0}^{p-2} x_i \inner{\y_i}{\s}} \ket{{\mathbf{x}}_1}\dots \ket{{\mathbf{x}}_{p-2}}\\
&= \sum_{{\mathbf{x}} \in \{0, \dots, M-1\}^{p-1}} \omega_q^{{\mathbf{x}}^Tv} \ket{{\mathbf{x}}},
\end{align*}
where ${\mathbf{v}} \in \ZZ_q^{p-1}$ is such that its $i$-th coordinate is $v_i = \inner{\y_i}{\s}$.
\item Let ${\mathbf{D}} \in \ZZ_q^{(p-1)\times (p-1)}$ be the invertible matrix associated with the bilinear form $B_L$ from \Cref{le:construction_QFT} with $L = t(p-1)$. By noting ${\mathbf{v}}' = {\mathbf{D}}^{-1} {\mathbf{v}}$, in other words there exists $(\y'_i)_{i = 0}^{p-2}$ such that ${\mathbf{v}}'_i = \inner{\y'_i}{\s}$ for $0\leq i \leq p-2$. We have
\begin{align*}
  \sum_{{\mathbf{x}} \in \{0, \dots, M-1\}^{p-1}} \omega_q^{{\mathbf{x}}^T {\mathbf{D}} {\mathbf{v}}'} \ket{{\mathbf{x}}} = \sum_{{\mathbf{x}} \in \iota(\{0, \dots, M-1\}^{p-1})} \chi_{\mathbf{x}} (\iota({\mathbf{v}}')) \ket{{\mathbf{x}}}
\end{align*}
where $\iota: \ZZ_q^{p-1} \to R_q$ is the injection with respect to the basis $(1, \zeta_p, \dots, \zetap^{p-2})$. Moreover, $\iota({\mathbf{v}}') = \sum_{i = 0}^{p-2} \zetap^{i} \inner{\y'_i}{\s} = \inner{\y'}{{\mathbf{s}}}$, where $\y' = \sum_{i = 0}^{p-2} \zetap^i \y'_i$. Note that $\y'$ follows a uniform distribution on $\iota(\{0, \dots, M-1\}^{p-1})$ by invertibility of the coordinate transformation. We obtain
\[\chi_{\mathbf{x}} (\iota({\mathbf{v}})) = \exp\left(2 i \pi \Tr\left(\frac{\inner{{\mathbf{x}} \y'}{\s}}{p \pi^{L-1}}\right)\right) = \chi_\s ({\mathbf{x}} \y').\]
The resulting state at the end of this operation is therefore
\[\sum_{{\mathbf{x}} \in \iota(\{0, \dots, M-1\}^{p-1})} \chi_\s ({\mathbf{x}} \y') \ket{{\mathbf{x}}}.\]
\item The idea is now to move from a superposition over ${\mathbf{x}} \in \iota\left(\{0, \dots, M-1\}^{p-1}\right)$ to a superposition over $E := \{\lambda_j\}_{j \in \FF_p}$ and thereby obtain a $\PSP$ sample. To this end, we will use the intermediate superposition $\iota(\{0, 1\}^{p-1})$. We therefore compute in an auxiliary register the function ${\mathbf{x}} \mapsto \lfloor\iota^{-1} ({\mathbf{x}})/2 \rfloor$, which we measure directly. Let ${\mathbf{k}} \in \ZZ_q^{p-1}$ denote the output, and suppose that ${\mathbf{k}} \in \{0, \dots, (M-1)/2 \}^{p-1}$ this happens with probability bounded below by a positive constant when $M$ is large (otherwise, we discard the resulting state and restart from the beginning). The remaining state has support \mbox{in the following translated set:}
\[\iota({\mathbf{k}}) + \iota(\{0, 1\}^{p-1}) \subset \iota(\{0, \dots, M-1\}^{p-1}),\]
that is,
\[\sum_{{\mathbf{x}} \in \iota(\{0, 1\}^{p-1})} \chi_\s (({\mathbf{x}} - \iota({\mathbf{k}})) \y') \ket{{\mathbf{x}}} \propto \sum_{{\mathbf{x}} \in \iota(\{0, 1\}^{p-1})} \chi_\s ( {\mathbf{x}} \y') \ket{{\mathbf{x}}}. \]
\item Finally, it suffices to compute the function ${\mathbf{x}} \mapsto \mathbf{1}_{E} ({\mathbf{x}})$ in a new auxiliary register. After measurement, there is a constant probability $p/2^{p-1}$ of obtaining the desired output $1$ and thus obtaining a $\PSP$ state with label $\y'$ and with respect to the same secret $\s \in \ZZ_q^{p-1}$. \qed
\end{enumerate}
\end{proof}

Observe that this lemma provides an alternative way to prove the existence of a quasi-polynomial-time algorithm for uniform EDCP without having to resort \mbox{to a reduction to DCP.}

\begin{theorem}
  Let $\kappa$ be the security parameter and $n, q = \poly(\kappa)$ be integers, where $q = p^t$ is a power of a fixed prime $p$. Let $r = \Omega(\sqrt{\kappa})$ and $q/r = \Omega(\sqrt{\kappa})$. There exists a quantum algorithm that solves $S \ket{\LWE}^\ell_{n, q, r}$ in time $2^{O(\log^2 n)}$ using $\poly(n)$ quantum space, when $\ell = 2^{\Omega(\log^2 n)}$.
\end{theorem}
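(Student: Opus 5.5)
The plan is to chain the reductions already established and then invoke the CCP/PSP sieve. Starting from $\ell$ Gaussian $S\ket{\LWE}_{n,q,r}$ samples, we first apply \Cref{le:SLWE_to_UEDCP}. The hypotheses $r=\Omega(\sqrt{\kappa})$, $q/r=\Omega(\sqrt{\kappa})$ and $n,q=\poly(\kappa)$ are exactly those of the theorem. Since $\ell=2^{\Omega(\log^2 n)}=\omega(\kappa)$, this yields, with overwhelming probability and in quantum polynomial time per sample, $O(\ell)$ samples of $\overline{\operatorname{U-EDCP}}_{n,q,M}$ with uniform known labels $\y_k\in\ZZ_q^n$ and $M=c\cdot r$.

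We then convert these into cyclotomic phase states via \Cref{le:UEDCP_to_PSP}. Step 3 of that lemma only uses a superposition over a box $\{0,\dots,M-1\}^{p-1}$ large enough that the measured $\lfloor\iota^{-1}(\x)/2\rfloor$ lands in the interior with constant probability. So the lemma applies for $M=c\cdot r\geq 2$ (for $\kappa$ large) rather than only $M=q$. If needed, we can first truncate $M$ using \Cref{le:EDCP_self_reduction}, which carries over verbatim to the phase formulation. Each batch of $p-1$ samples thus yields, with probability at least a constant $c_p>0$ (of order $p/2^{p-1}$ times the interior probability), one $\PSP_{n,L,p}$ sample with $L=t(p-1)$ and uniform label. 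Since $p$ is fixed, we get $\Omega(\ell)$ independent $\PSP_{n,L,p}$ samples with the same secret $\s$. Independence and uniformity of the labels follow because disjoint batches are processed and the label map $(\y_0,\dots,\y_{p-2})\mapsto\y'$ is a bijection.

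Finally, we run the sieving part of \Cref{al:CCP_solver} starting from these phase states, i.e.\ Steps 2--6, skipping Step 1. By \Cref{le:sieving_step}, \Cref{th:imran_ivanos} and \Cref{le:fusion-uniform}, each level divides the list size by $S(n,p)=n^{O(1)}$ for fixed $p$ and preserves uniform labels. After $L-1=O(\log q)$ levels we need $\poly(n)$ survivors, so $\ell\geq S(n,p)^{L}\poly(n)=2^{O(\log n\log q)}$ suffices. Because $q=\poly(\kappa)$, and $n$ is polynomially related to $\kappa$ in this regime, we have $\log q=O(\log n)$, giving $2^{O(\log^2 n)}$ for both the sample count and the running time. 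Quantum space stays $\poly(n)$ by processing buckets depth-first as in the CCP theorem.

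The main obstacle is the round structure for recovering $\s$ beyond $\s\bmod p$. The CCP solver peels one $p$-adic digit per round by shifting the secret. For phase states $\chi_{\y}(\lambda_j\s)$, we would instead multiply in the known phase $\chi_{\y}(-\lambda_j\s_0)$ (a diagonal unitary computable from $\y$ and $\s_0$), which leaves a secret $\s-\s_0\in p\ZZ_q^n$. We then observe that $\chi_\y(\lambda_j p\s'')=\chi_{p\y}(\lambda_j\s'')$, which is a phase state over $R_{q/p}$ with label $\y\bmod q/p$, still uniform. Checking this rescaling carefully through the trace pairing, since $p=u_0\pi^{p-1}$ and the unit $u_0$ must be absorbed into the label, is the step I would verify first. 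This requires fresh samples for each of the $t$ rounds, which only multiplies the sample count by $O(\log q)$.
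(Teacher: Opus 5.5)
Your proposal is correct and follows the same chain as the paper: $S\ket{\LWE}\to\overline{\operatorname{U-EDCP}}\to\PSP_{n,L,p}$, then the sieve of \Cref{al:CCP_solver}. Your two additions fill in details the paper's three-line proof leaves implicit, and both check out: \Cref{le:UEDCP_to_PSP} works for any $M\geq 2$, so $M=c\cdot r$ suffices rather than only $M=q$; and the digit-peeling can be run directly on phase states, where the level-$L$ character $\chi_{\y}(\lambda_j p\s'')$ equals the level-$(L-p+1)$ character $\chi_{u_0\y}(\lambda_j\s'')$, so the new label is $u_0\y \bmod \pi^{L-p+1}$ (not $\y \bmod q/p$), which is still uniform because $u_0$ is a unit.
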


\begin{proof}
  Start by applying \Cref{le:SLWE_to_UEDCP} to obtain $O(\ell)$ $\overline{\operatorname{U-EDCP}}$ samples, then applying \Cref{le:UEDCP_to_PSP} to get $O(\ell)$ phase-state sample involving the same secret $\s \in \ZZ_q^n$. Then it remains to apply the procedure from \Cref{al:CCP_solver} to recover $\s$ from a \mbox{quasi-polynomial number of phase-state samples.}\qed
\end{proof}

Note that the \Cref{le:UEDCP_to_PSP}, which is used to prove the existence of an algorithm for the Gaussian $S\ket{\LWE}$, is specific to our paper. Indeed, to the best of our knowledge, $S\ket{\LWE}$ is not a hidden subgroup problem. We therefore would not have been able to obtain such an algorithm had we not defined a quantum Fourier transform on $R_q$ and had we not \mbox{explicitly specified the phase-state samples.}

\paragraph{Non-impact on Learning with Errors.}  Since this result generalizes the algorithm in \cite{bai_quasi-polynomial_2025}, we obtain the same conclusion regarding the security of LWE. Indeed, the sieving algorithm requires too many CCP samples, and the reduction from LWE to DCP places an upper bound on the number of samples that is smaller than the number of samples required to solve the problem.

\bibliographystyle{splncs04}
\bibliography{references.bib}

\end{document}